\tikzstyle{place}=[circle,draw=black,fill=gray!15,thick,inner sep=0pt,minimum size=5mm]
\tikzstyle{place1}=[circle,draw=black,fill=black!95,thick,inner sep=0pt,minimum size=1mm]
\tikzstyle{place3}=[circle,draw=black,fill=black,thick,inner sep=.3pt]
\newtheorem{theorem}{Theorem}
\newcommand{\rr}{\ensuremath{\mathbb{R}}}
\newcommand{\rob}{\textrm{Rob}}
\newcommand{\inc}{\textrm{Inc}}
\newcommand{\adv}{\textrm{Adv}}
\newcommand{\opt}{\textrm{Z}}
\newcommand{\pa}{K}
\newcommand{\bls}{\boldsymbol}
\newcommand{\bl}{\mathbf}
\newcommand{\incfun}{\ensuremath{F}}
\newcommand{\robinc}{\text{RobInc}}
\newcommand{\prob}[3]%
  { \vspace{.25cm}
    \noindent\fbox{\begin{minipage}{.98\textwidth}
      \textsc{#1}

      \vspace{.25cm}
      \begin{center}
        \begin{tabular}{lp{.8\textwidth}}
          Input:  & #2 \\
          Task: & #3
        \end{tabular}
      \end{center}
    \end{minipage}}

    \vspace{.25cm}}%
\newcommand{\probbox}[3]%
  { \vspace{.5cm}
    \noindent{\begin{minipage}{.98\textwidth}
      \textsc{\hspace{.5cm}#1}

      \vspace{.25cm}
      \begin{center}
        \begin{tabular}{lp{.8\textwidth}}
          Given:  & #2 \\
          Question: & #3
        \end{tabular}
      \end{center}
    \end{minipage}}

    \vspace{.25cm}}%
    \noindent\fbox{\begin{minipage}{.98\textwidth}
      \textsc{#1}

      \vspace{.25cm}
      \begin{center}
        \begin{tabular}{lp{.8\textwidth}}
          Input:  & #2 \\
          Output: & #3
        \end{tabular}
      \end{center}
   \end{minipage}}
\journal{EJOR}
\begin{document}

\begin{frontmatter}



\title{Robust optimization with incremental recourse}


\author{Ebrahim Nasrabadi, James~B.~Orlin}
\address{Sloan School of Management and Operations Research Center, Bldg. E40-147, Massachusetts Institute of Technology, Cambridge, Massachusetts 02139\\
E-mail: \{nasrabad,jorlin\}$@$mit.edu}

\begin{abstract}
In this paper, we consider an adaptive approach to address optimization problems with uncertain cost parameters. Here, the decision maker selects an initial decision, observes the realization of the uncertain cost parameters, and then is permitted to modify the initial decision. We treat the uncertainty using the framework of robust optimization in which uncertain parameters lie within a given set. The decision maker optimizes so as to develop the best cost guarantee in terms of the worst-case analysis. The recourse decision is ``\emph{incremental}"; that is, the decision maker is permitted to change the initial solution by a small fixed amount. We refer to the resulting problem as the \emph{robust incremental} problem.  We study robust incremental variants of several optimization problems. We show that the robust incremental counterpart of a linear program is itself a linear program if the uncertainty set is polyhedral. Hence, it is solvable in polynomial time. We establish the NP-hardness for robust incremental linear programming for the case of a discrete uncertainty set. We show that the robust incremental shortest path problem is NP-complete when costs are chosen from a polyhedral uncertainty set, even in the case that only one new arc may be added to the initial path. We also address the complexity of several special cases of the robust incremental shortest path problem and the robust incremental minimum spanning tree problem. 
 \end{abstract}

\begin{keyword}
robust optimization\sep incremental optimization\sep network optimization\sep complexity
\end{keyword}

\end{frontmatter}


\section{Introduction}
\label{sec:intro}
Researchers in the optimization community have developed a variety of approaches for addressing  problems of optimization under uncertainty. In general, there are two main approaches to address uncertainty in optimization models: stochastic optimization and robust optimization.  
The former approach treats the uncertainty in the data as random variables, thus giving a rich set of modeling tools.  These models typically lead to problems that are quite challenging to solve. It is also a practical challenge to determine probability distributions that accurately model the uncertainty.  For more information on stochastic optimization, see \cite{KallWallace94,BirgeLouveaux97,Shapiro09}. In contrast, robust optimization models the uncertainty in a deterministic manner. It assumes that the uncertain parameters come from known sets. It seeks solutions with the best worst-case cost guarantee. We refer the reader to \cite{BenTal04,BertsimasBrownCara11} for a discussion of the theory and applications of robust optimization. In this paper, we consider a modeling framework of robustness that allows the decision maker to adjust the initial solution by a bounded amount after the uncertain data is realized. In what follows, we provide a mathematical description of our problem and then review the related literature.

\paragraph{Problem description}
We consider optimization problems of the following form:
\begin{align}
\label{pro:Opt}
  \begin{aligned}
	& \min	&&\bl{c}^T \bl{x} \\
          & \text{~s.t.}   &&\begin{aligned}[t]
        							\bl{x}&\in  \mathcal{S},                
       				   \end{aligned}
\end{aligned}
\end{align}
where $\mathcal{S}\subseteq \rr^n$ denotes the feasible region and $\bl{c}\in \rr^n$ is the vector of cost parameters. For a fixed cost vector $\bl{c}$, we refer to the above problem as the \emph{nominal} problem. Notice that the vector~$\bl{c}$ is a column vector, and the superscript $T$ denotes the transpose operation. For clarity of the presentation, we will denote all vectors by bold small letters and all matrices by bold capital letters. 

We assume that the cost parameters are subject to uncertainty. We let $\mathcal{U}\subseteq \rr^n$ be the \emph{uncertainty set}, that is, the set of all possible cost vectors.  If $\bl{x}$ is a feasible solution, the objective function value for $\bl{x}$ in the robust optimization problem is 
$\max\{\bl{c}^T\bl{x} :\bl{c} \in \mathcal{U}\}$, which corresponds to the worst-case cost for solution $\bl{x}$. The mathematics can be viewed in terms of an \emph{adversary} who maliciously wants to increase the objective function value. The robust optimization problem is to find a minimax solution; that is, the feasible solution with the best worst-case cost guarantee. 

We further assume that the decision maker is able to make an incremental change in the solution~$\bl{x}$ after the uncertain cost parameters are revealed. We let $\mathcal{S}_{\bl{x}}:=\{\bl{y}\in \mathcal{S}:~ F(\bl{x},\bl{y})\leq \pa\}$ be the set of all possible choices, where $F(\bl{x},\bl{y})$ is a measure of the distance between solutions $\bl{x}$ and $\bl{y}$, and $\pa$ is a given upper bound on the total distance permitted. We refer to the set $\mathcal{S}_{\bl{x}}$ as  the \emph{incremental set} and to the function $F$ as the \emph{incremental function}. 

The goal is to find the best initial solution $\bl{x}$, assuming the worst-case cost scenario occurs and the decision maker is then allowed to transform the solution $\bl{x}$ into another solution $\bl{y}$ subject to the constraint that $\incfun(x,y)\leq \pa$.  This leads to the following \emph{robust incremental} optimization problem
\begin{align}
  \label{pro:IncRobOpt}
\opt_{\robinc}:=\min_{\bl{x}\in \mathcal{S}}~ \bl{d}^T\bl{x}+\max_{\bl{c}\in \mathcal{U}}~\min_{\bl{y}\in \mathcal{S}_{\bl{x}}} &\quad \bl{c}^T\bl{y}.
\end{align}
Notice that the objective function includes a term $\bl{d}^T\bl{x}$, where $\bl{d}$ is the cost vector for the initial decision. 

We illustrate our model with a shortest path problem under uncertainty. Suppose that a commuter needs to determine a route to be taken each day. Occasionally, roadwork causes certain routes to be unavailable or to have extra delays. In this case,  the robust incremental shortest path problem would be to select an initial route so as to minimize the worst-case travel given (1) the possible route delays are from a specified set $\mathcal{U}$, and (2) the modified route can vary from the initial route by a limited ``distance" $\pa$.  Another possible example is slowing or stopping an outbreak of a disease by imposing selected quarantines or limits on travel. As the disease progresses, some interdictions may become very costly, and the solution needs to be adjusted.  But in incremental optimization, adjustments to the original plan must be limited in scope. 

%
%
Problem~\eqref{pro:IncRobOpt} includes several interesting problems as special cases. If $\bl{d} = 0$ and if $\mathcal{S}_{\bl{x}}=\{\bl{x}\}$, it reduces to the following \emph{robust} problem: 
\begin{align}
  \label{pro:RobOpt}
\opt_{\rob}:=&\min_{\bl{x}\in \mathcal{S}}~ \max_{\bl{c} \in \mathcal{U}}~ \bl{c}^T\bl{x}.
\end{align}
If $\bl{d} = 0$ and $\mathcal{S}_{\bl{x}}=\mathcal{S}$, then Problem~\eqref{pro:IncRobOpt} reduces to the following \emph{adversarial} problem:  
\begin{align}
  \label{pro:AdvOpt}
\opt_{\adv}:=&\max_{\bl{c} \in \mathcal{U}}~\min_{\bl{y}\in \mathcal{S}_{\bl{x}}}~ \bl{c}^T\bl{y}.
\end{align}

The inner minimization problem in Problem~\eqref{pro:IncRobOpt} corresponds to the following \emph{incremental} problem:
\begin{align}
  \label{pro:IncOpt}
\opt_{\inc}:=&\min_{\bl{y}\in \mathcal{S}_{\bl{x}}} ~ \bl{c}^T\bl{y},
\end{align}
which addresses the case where $\bl{x}$ is fixed and the uncertain data is realized, and the goal is to make an improved subsequent decision. Notice that Problem~\eqref{pro:IncRobOpt}, in general, does not include the incremental problem~\eqref{pro:IncOpt}. However, for the case where $\mathcal{S}\subseteq \{0,1\}^n$, Problem~\eqref{pro:IncOpt}  becomes a special case of Problem~\eqref{pro:IncRobOpt} upon setting $d_i$ to be a very large value ($d_i\geq n C$ would be enough where $C:=\max\{c_i:~i=1,\ldots,n\}$) if $x_i=0$ and $d_i=0$, otherwise. Moreover, we require to set $\mathcal{U}=\{\bl{c}\}$. We then observe that the best initial solution for Problem~\eqref{pro:IncRobOpt} is $\bl{x}$. Since  $\mathcal{U}=\{\bl{c}\}$, the problem reduces to the incremental problem.

\paragraph{Literature review} 
Robust optimization has been well studied in the literature. In the case that the nominal problem is convex and the uncertainty set is polyhedral or ellipsoidal, tractable reformulations are known (see, e.g., \cite{BenElNem09,BertsimasBrownCara11} and the references therein). For 0--1 discrete optimization problems with uncertain cost parameters, Bertsiams and Sim \cite{BertsimasSim03} developed efficient algorithms for special cases in which the nominal problem is efficiently solvable.  Goetzmann \emph{et al.} \cite{GoetStilleTelha11} extended the results in \cite{BertsimasSim03} for integer programs with uncertain cost parameters and for integer programs with uncertainty in one or few constraints.


In robust optimization,  the decision maker must determine all the decisions simultaneously, and in particular before the realization of the uncertainty. This assumption leads to a single-stage optimization problem. 
Ben-Tal \emph{et al.} \cite{BenTal04} proposed a two-stage robust optimization modeling approach, the so-called \emph{adjustable robust} optimization, in which the decision maker makes two sets of decisions: one set before the uncertainty being realized and one set subsequently. They showed that the adjustable robust counterpart of a linear program with right hand side uncertainty is, in general, NP-hard. Subsequently, a number of researchers have attempted to obtain approximations to the adjustable robust optimization problem. Significant results have been obtained in designing approximation algorithms for adjustable robust optimization problems. We refer the reader to \cite{BenTal04,BertsimasBrownCara11} and the references therein for a review of the literature on adjustable robust optimization.  It is worthwhile pointing out that the robust incremental problem is a specific case of adjustable robust optimization and fits into the class of recoverable robustness. While previous research on adjustable optimization has mainly worked on problems with convex uncertainty sets, our main focus is on problems in which the uncertain costs reside within a discrete set.

In 2009, Liebchen \emph{et al.} \cite{Liebchen09} introduced a new concept of robust adjustable optimization, so-called \emph{recoverable robustness}, by allowing a limited recovery after the realization of uncertain parameters. They applied their modeling methodology to linear programming problems and provided an efficient algorithm in the case of right-hand side disturbances. Later, B{\"u}sing \cite{Busing12} specialized the concept of recoverable robustness to the shortest path problem and presented hardness results and approximation algorithms for different variants of the problem. We point out that the robust incremental problem fits into the class of recoverable robustness introduced by Liebchen \emph{et al.} \cite{Liebchen09} and is quite similar to the recoverable robust shortest path problem studied by B{\"u}sing \cite{Busing12}.

In robust optimization, the decision maker first chooses a solution, and then adversary selects the cost vector. In the adversarial problem,  the order of moves is reversed: the adversary first secrets the cost vector, and then
the decision maker selects a solution. Here, the adversary seeks a cost vector to maximize the optimal value of the solution chosen by the decision maker.  This generalizes \emph{the most vital variable problem}, which is to identify a variable whose cost increase leads to the largest increase in the optimal value (see, e.g., \cite{vanHoesel89}). This includes the problem of finding most vital arcs in network optimization, such as in the maximum flow problem \cite{Wollmer64,Wood93,RoysetWood07}, the minimum spanning tree problem \cite{LinChern93,Shen95} and the shortest path problem \cite{Bar-NoyKhullerSchieber95,IsraeliWood02,SanseverinoMarquez10}.  

In incremental optimization, it is assumed that an initial solution is given, and the aim is to make an incremental change in the solution that will result in the greatest improvement in the objective function. The incremental counterpart of several network flow problems was studied by \c{S}eref \emph{et al.} \cite{SerefAhujaOrlin09}. They presented a polynomial algorithm for the incremental minimum spanning tree problem. They showed that the incremental minimum cost flow problem can be solved in polynomial time using Lagrangian relaxation. They also considered two versions of the incremental minimum shortest path problem, where increments are measured via arc inclusions and arc exclusions. They presented a polynomial time algorithm for the arc inclusion version and showed that the arc exclusion version is NP-complete.

\paragraph{Modeling assumptions and notation} The analysis of Problem~\eqref{pro:IncRobOpt} depends on the set $\mathcal{S}$, the uncertainty set $\mathcal{U}$, and the incremental set $\mathcal{S}_{\bl{x}}$ (or equivalently the incremental function $F$). 
We study linear and integer optimization problems under different types of uncertainty sets and incremental functions.  
To model the uncertainty set, we assume that the nominal costs are given by $\bar{c}_i, i=1,\ldots,n$ and the cost of $i$-th cost coefficient can increase by at most $\hat{c}_i$. Thus, any possible realization of the $i$-th cost coefficient lies within the interval $[\bar{c}_i,\bar{c}_i+\hat{c}_{i}]$. Given a parameter $\Gamma>0$, we consider the following two uncertainty sets:
\begin{align}
\label{eq:U1}
\mathcal{U}_1&:=\{\bl{c}=\bar{\bl{c}}+\bls{\delta} :~ \bl{0}\leq \bls{\delta} \leq \hat{\bl{c}},~\sum_{i=1}^{n}\delta_{i}\leq \Gamma\},\\
\label{eq:U2}
\mathcal{U}_2&\begin{aligned}[t]
:=\{\bl{c}=\bar{\bl{c}}+\bls{\delta}\cdot \hat{\bl{c}}  :~ \bls{\delta}\in \{0,1\}^n, \sum_{i=1}^{n}\delta_i\leq \Gamma\},
 \end{aligned}
\end{align}
where $\bls{\delta}\cdot \hat{\bl{c}}$ is the doc product of the vectors $\bls{\delta}$ and $\hat{\bl{c}}$.
The uncertainty set $\mathcal{U}_1$ models the case where the total change permitted in the  cost coefficients is limited by $\Gamma$, whereas in the uncertainty set $\mathcal{U}_1$, the cost of at most $\Gamma$ coefficients is allowed to increase from their nominal costs (here $\Gamma$ is assumed to be integer). The extreme points of the set $\mathcal{U}_1$ are the cost vectors in $\mathcal{U}_2$. Hence, the two sets lead to equivalent robust optimization problems. But this is not the case for the adversarial problem and the robust incremental problem as the two sets lead to different problems.
%
%

%
%

Because robust incremental problems involve three stages of decisions, it is unlikely that they will be in the class NP (assuming that the polynomial time hierarchy does not collapse.)  
For example, consider the special case of the robust incremental problem (\ref{pro:IncRobOpt}) in which $S$ denotes the feasible region of a combinatorial optimization problem, and
where $\mathcal{U} = \mathcal{U}_1$, and $S_x = \{y: \sum_{i=1}^{n}|x_i-y_i| \le K\}$ for some specified $K$.  
Suppose that the nominal optimization problem is solvable in polynomial time. 
In this case, the decision variant of the third stage problem, i.e., the incremental problem, is in the class P.  (The decision variant is a yes-no variant of the optimization problem.)  The decision variant of the second stage decision, i.e., the adversarial problem, is in the class NP, and the decision variant of the robust incremental problem is in the class $\Sigma^p_2$, which is one level higher than NP in the polynomial time hierarchy (see, e.g., \cite{meyer1972equivalence,stockmeyer1976polynomial}).
 If the decision variant of the third stage problem, i.e., the incremental problem, is in NP, then the adversarial problem  is in the class
$\Sigma^p_2$.  And the decision variant of the robust incremental problem is in the class $\Sigma^p_3$, which is one level higher than $\Sigma^p_2$ on the polynomial time hierarchy. 

In this paper, we address problems in which the nominal problem is in the class P.  
 In particular, we consider robust incremental variants of the following three types of problems: (1) linear programming problems, (2)  shortest path problems, and  (3) minimum spanning tree problems.  In cases in which the adversarial problem is NP-hard, it remains an open question as to whether the decision variant of the robust incremental problem  $\Sigma^p_2$-complete.

We consider three types of incremental functions.  For linear programming, we restrict attention to the $L_1$--distance  function, that is, $F(\bl{x},\bl{y})= \sum_{i=1}^{n}|x_i-y_i|$.  For the shortest path problem, we consider two additional incremental functions. In measuring the distance of a path $P$ from a given path $P^0$, we consider the arc inclusion function $|P\setminus P^0|$ and the arc exclusion function $|P^0\setminus P|$.  We note that the $L_1$--distance in this case is equivalent to the arc symmetric-difference $|P^0\oplus P|$. 
For the minimum spanning tree problem, the complexity for all three metrics is equivalent, and we measure the distance of a tree $T$ from a given tree $T^0$ by the arc inclusion  function $ |T\setminus T^0|$. 

\paragraph{Our contribution} 
Our primary contributions are as follows:
\begin{enumerate}
\item We show that the robust incremental counterpart of a linear program with respect to uncertainty set $\mathcal{U}_1$ is transformable into a linear program and consequently solvable in polynomial time.  In the case that the uncertainty set is $\mathcal{U}_2$, we show that robust incremental linear programming is NP-hard.  In particular, we show that the adversarial minimum cost flow problem is NP-hard with respect to the discrete uncertainty set $\mathcal{U}_2$.
\item We show that the robust incremental shortest path problem is NP-hard even in the special case that the uncertainty set is $\mathcal{U}_1$ and only one new arc may be added to the initial path.
\item We prove that the adversarial shortest path problem is solvable in polynomial time with respect to the uncertainty set $\mathcal{U}_1$ and the arc inclusion function, while the robust incremental problem is NP-hard. We also show that the incremental shortest path problem  is NP-hard with respect to the arc symmetric-difference function. 
\item We also consider an adversarial variant of the minimum spanning tree problem. We show that the adversarial minimum spanning tree problem is solvable in polynomial time when the cost parameters lie in the uncertainty set $\mathcal{U}_1$.
\end{enumerate}

%

\begin{table}[t]
\caption{\label{tbl:main}Complexity of different problems}
\centering
 \begin{tabular}{|c|c|c|c|c|}
\hline\hline Problem/Unc/Dist&$\opt_{\rob}$ &$\opt_{\inc}$ & $\opt_{\adv}$& $\opt_{\robinc}$\\
\hline LP/$\mathcal{U}_1$/$L_1$	&P ~\cite{Ben-TalNemirovski99,BertsimasSim04} 	&P~\cite{SerefAhujaOrlin09}& \bf{P} 	& \bf{P}\\
\hline LP/$\mathcal{U}_2$/$L_1$	&P ~\cite{Ben-TalNemirovski99,BertsimasSim04}	&P~\cite{SerefAhujaOrlin09}&\bf{NPC} 	& \bf{NPH}\\
\hline SP/$\mathcal{U}_1$/Incl		&P ~\cite{BertsimasSim03} 	&P~\cite{SerefAhujaOrlin09}&\bf{P}	& {\bf{NPC}}\\
\hline SP/$\mathcal{U}_1$/Excl		&P ~\cite{BertsimasSim03} &NPC~\cite{SerefAhujaOrlin09}&NPC~\cite{SerefAhujaOrlin09}&NPH~\cite{SerefAhujaOrlin09}\\
\hline SP/$\mathcal{U}_1$/Sym		&P ~\cite{BertsimasSim03} 	&\bf{NPC} &\bf{NPH} 	& \bf{NPH}\\
\hline SP/$\mathcal{U}_2$/Incl		&P ~\cite{BertsimasSim03} 	&P~\cite{SerefAhujaOrlin09}&NPC~\cite{Bar-NoyKhullerSchieber95}	& {\bf{NPH}}~\cite{Bar-NoyKhullerSchieber95,Busing12}\\
\hline SP/$\mathcal{U}_2$/Excl		&P ~\cite{BertsimasSim03} 	&NPC~\cite{SerefAhujaOrlin09} 	&NPC~\cite{Bar-NoyKhullerSchieber95}& NPH~\cite{Bar-NoyKhullerSchieber95}\\
\hline SP/$\mathcal{U}_2$/Sym		&P ~\cite{BertsimasSim03} 	&\bf{NPC}	&NPC~\cite{Bar-NoyKhullerSchieber95}& {\bf{NPH}}~\cite{Bar-NoyKhullerSchieber95}\\
\hline MST/$\mathcal{U}_1$/Incl	&P ~\cite{BertsimasSim03}	&P~\cite{SerefAhujaOrlin09}&\bf{P}	& -- --\\
\hline MST/$\mathcal{U}_2$/Incl	&P ~\cite{BertsimasSim03} 	&P~\cite{SerefAhujaOrlin09}	&NPC~\cite{LinChern93}	& NPH~\cite{LinChern93}	\\
\hline\hline
\end{tabular}
\\[2mm]
\raggedright
\end{table}

We refer to Table \ref{tbl:main} for a summary of our contributions as well as a summary of known results from the literature. The input for the problem is given as a triple.  The first term of the triple is LP, SP, or MST indicating linear programs, shortest paths or minimum spanning trees.  The second term of the triple is $\mathcal{U}_1$ or $\mathcal{U}_2$, indicating the type of uncertainty set.  The third term of the triple is the type of distance metric.  It is either $L_1$ or it is an abbreviation for set inclusion, set exclusion, or set symmetric difference.
We indicate the complexity of the decision version of the problem as P (polynomially solvable), NPC (NP-complete) or NPH (NP-hard).  If the complexity is stated with a bold font, it refers to results established in this paper.  For SP/$\mathcal{U}_2$/Incl,  B{\"u}sing \cite{Busing12} shows that the problem is NP-hard for constant $K\geq 1$. Here, we show that the problem is NP-hard, even for $\Gamma=K=1$.  For SP/$\mathcal{U}_1$/Incl, we prove that the problem is NP-hard and inapproximable within within a factor of 2, even for $\Gamma=K=1$
Of the 28 different problems \footnote{Notice that some problems in the table are equivalent. For example, LP/$\mathcal{U}_1$/$L_1$ is equivalent to LP/$\mathcal{U}_2$/$L_1$ for the robust problem. In total, there are 28 different problems in Table \ref{tbl:main}.} summarized in Table \ref{tbl:main}, only one of the problems remains open.

\section{Robust-Incremental Linear Programming}
\label{sec:LP}
In this section, we consider the robust incremental optimization model for linear programming. More precisely, we assume that the feasible region $\mathcal{S}$ is given by non-negativity constraints as well as a number of linear equalities, that is,  
\begin{align}
\label{eq:S-LP}
\mathcal{S}:=\{\bl{x}\in \rr^{n}_{+}:~ \bl{A}\bl{x}=\bl{b}\},
\end{align}
where $\bl{A}$ is an $m\times n$ matrix and $\bl{b}$ is an $m$-vector of the right-hand side parameters. 

We first consider the uncertainty set $\mathcal{U}_1$, where the total change in the cost parameters is bounded by $\Gamma$. We further assume that the total change permitted in the initial solution is bounded by $\pa $. The incremental set is given by
\begin{align}
\label{eq:S_x-LP}
\mathcal{S}_{\bl{x}}:=\{\bl{y}\in \mathcal{S} :~ \sum_{i=1}^{n}|x_i-y_i|\leq \pa \}.
\end{align}

We show that the resulting robust incremental optimization problem may be represented as a linear programming problem, and hence can be solved efficiently. With respect to a solution $\bl{x}\in \mathcal{S}$ and a cost vector $\bl{c} \in \mathcal{U}$,  we define $\opt_{\inc}(\bl{x},\bl{c})$ to be the optimal value of the incremental problem, that is, 
\begin{align}
\label{pro:Inc-LP(NLP)}
  \begin{aligned}
	\opt_{\inc}(\bl{x},\bl{c}) := 	& \min	&&\bl{c}^T\bl{y} \\
          		  		& \text{~s.t.}   &&\begin{aligned}[t]
        							Ay  &=\bl{b}, \\
							\sum_{i=1}^{n}|x_i-y_i|&\leq \pa ,\\
							\bl{y}     &\geq \bl{0}.          
  \end{aligned}
\end{aligned}
\end{align}
We introduce two nonnegative variables $z_i^{+}, z_i^{-}$, and set $y_i-x_i=z_i^{+}-z_i^{-}$  for $i=1,\ldots,n$. The variables $z_i^{+}$ and $z_i^{-}$ correspond to the increase or decrease in $x_i$, respectively. There is always an optimal solution in which either $z_i^{+}=0$ or $z_i^{-}=0$, in which case $| z_i^{+}-z_i^{-}|$ = $z_i^{+}+z_i^{-}$.
Hence, $\opt_{\inc}(\bl{x},\bl{c})$ is equivalent to the following linear program:
\begin{align}
\label{pro:Inc-LP(LP)}
  \begin{aligned}
	\opt_{\inc}(\bl{x},\bl{c}) = & \min	&&\bl{c}^T\bl{x}+\bl{c}^T(\bl{z}^{+}-\bl{z}^{-}) \\
          		  		& \text{~s.t.}   &&\begin{aligned}[t]
        							\bl{A}(\bl{z}^{+}-\bl{z}^{-})  &= \bl{0}, \\
							\sum_{i=1}^{n}(z_i^{+}+z_i^{-})&\leq \pa ,\\
							\bl{z}^{-}&\leq \bl{x},\\
							\bl{z}^{+},\bl{z}^{-}     &\geq \bl{0}.          
  \end{aligned}
\end{aligned}
\end{align}
The dual of the above problem is the following linear program:
\begin{align}
\label{pro:Inc-LP*(LP)}
  \begin{aligned}
	\opt_{\inc}(\bl{x},\bl{c}) = & \max	&&\bl{c}^T\bl{x}-\alpha\pa -\bl{x}^T\bl{v} \\
          		  		& \text{~s.t.}   &&\begin{aligned}[t]
        							\bl{w}^T\bl{A}-\alpha \bl{1}  &\leq \bl{c}, \\
							-\bl{w}^T\bl{A}-\alpha \bl{1}-\bl{v}  &\leq -\bl{c}, \\
							\bl{v}     &\geq \bl{0},\\         
							\alpha     &\geq 0,         
  \end{aligned}
\end{aligned}
\end{align}
where $\bl{w}$ is an $m$-vector, $\bl{v}$ is an $n$-vector, and $\bl{1}$ is an $m$-vector each of whose entries is one. Then $\bl{w}$ is the vector of dual variables corresponding to the first set of constraints, $-\alpha$ is a  dual variable corresponding to the second constraint, and $\bl{v}$ is the vector of dual variables corresponding to the third set of constraints in Problem~\eqref{pro:Inc-LP(LP)}.

We now define $\opt_{\adv}(\bl{x}):=\max_{\bl{c} \in \mathcal{U}_1} \opt_{\inc}(\bl{x},\bl{c}) $ to be the optimal value of the adversarial problem with respect to a given solution $\bl{x}$. Following the above discussion,  $\opt_{\adv}(\bl{x})$ can be expressed as follows:
\begin{align}
\label{pro:Adv-LP}
  \begin{aligned}
	\opt_{\adv}(\bl{x}) =& \max	&&\bar{\bl{c}}^T\bl{x}+\bl{\delta}^T\bl{x}-\alpha\pa -\bl{x}^T \bl{v}  \\
          		  		& \text{~s.t.}   &&\begin{aligned}[t]
        							\bl{w}^T\bl{A}-\alpha \bl{1}-\bl{\delta}  &\leq \bar{\bl{c}}, \\
							-\bl{w}^T\bl{A}-\alpha \bl{1}-\bl{v}+\bl{\delta} &\leq -\bar{\bl{c}}, \\
							\sum_{i=1}^{n}\delta_i&\leq \Gamma,\\
							0\leq \bl{\delta} &\leq \hat{\bl{c}},\\
							\bl{v}     &\geq \bl{0},\\         
							\alpha     &\geq 0.        
  \end{aligned}
\end{aligned}
\end{align}

We once again take the dual and obtain:
\begin{align}
\label{pro:Adv-LP*}
  \begin{aligned}
	\opt_{\adv}(\bl{x}) =  & \min	&&\bar{\bl{c}}^T(\bl{x}+\bl{z}^{+}-\bl{z}^{-})+\beta\Gamma+\hat{\bl{c}}^T\bl{q}  \\
          		  		& \text{~s.t.}   &&\begin{aligned}[t]
       							\bl{A}(\bl{z}^{+}-\bl{z}^{-})  &= 0, \\
							\sum_{i=1}^{n}(z_i^{+}+z_i^{-})&\leq \pa ,\\
        							-\bl{z}^{+}+\bl{z}^{-}+\beta \bl{1}+\bl{q}  &\geq \bl{x}, \\
							\bl{z}^{-}&\leq \bl{x},\\
							\bl{z}^{+},\bl{z}^{-},\bl{q}     &\geq \bl{0},\\          
							\beta     &\geq 0.          
  \end{aligned}
\end{aligned}
\end{align}

The goal of Problem~\eqref{pro:IncRobOpt} is find a solution $\bl{x}\in \mathcal{S}$ with minimum value $\opt_{\adv}(\bl{x})$. This establishes the following theorem.

\begin{theorem}
Suppose that the sets $\mathcal{S}$, $\mathcal{U}$ and $\mathcal{S}_{\bl{x}}$ are given by \eqref{eq:S-LP}, \eqref{eq:U1}, and \eqref{eq:S_x-LP}, respectively. Then, the robust incremental optimization Problem~\eqref{pro:IncRobOpt} may be formulated as the following linear programming problem:
\begin{align}
  \begin{aligned}
& \min	&&\bl{d}^T\bl{x}+\bar{\bl{c}}^T(\bl{x}+\bl{z}^{+}-\bl{z}^{-})+\beta\Gamma+\hat{\bl{c}}^T\bl{q}   \\
          		  		& \text{~s.t.}   &&\begin{aligned}[t]
       							\bl{A}\bl{x}&=\bl{b},\\
       							\bl{A}(\bl{z}^{+}-\bl{z}^{-})  &= 0, \\
							\sum_{i=1}^{n}(z_i^{+}+z_i^{-})&\leq \pa ,\\
        							-\bl{z}^{+}+\bl{z}^{-}+\beta \bl{1}+\bl{q}  &\geq \bl{x}, \\
							\bl{z}^{-}&\leq \bl{x},\\
							\bl{z}^{+},\bl{z}^{-},\bl{q}     &\geq \bl{0},\\          
							\bl{x}     &\geq 0,  \\        
							\beta     &\geq 0.          
  \end{aligned}
\end{aligned}
\end{align}
Therefore, Problem~\eqref{pro:IncRobOpt} can be solved in polynomial time.
\end{theorem}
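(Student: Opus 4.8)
The plan is to recognize that the linear program in the statement is obtained by \emph{merging} the outer minimization over $\bl{x}\in\mathcal{S}$ with the minimization that already expresses $\opt_{\adv}(\bl{x})$ in \eqref{pro:Adv-LP*}, once we have justified that the two dualizations carried out above are valid. First I would record that, by definition of Problem~\eqref{pro:IncRobOpt}, we have $\opt_{\robinc}=\min_{\bl{x}\in\mathcal{S}}\bigl[\bl{d}^T\bl{x}+\opt_{\adv}(\bl{x})\bigr]$, where $\opt_{\adv}(\bl{x})=\max_{\bl{c}\in\mathcal{U}_1}\opt_{\inc}(\bl{x},\bl{c})$. The entire argument rests on two applications of linear programming strong duality, so the point to verify is that the relevant primal programs are feasible with finite optimal value.

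For the incremental problem \eqref{pro:Inc-LP(LP)} with $\bl{x}\in\mathcal{S}$ fixed, I would observe that $\bl{y}=\bl{x}$ (equivalently $\bl{z}^{+}=\bl{z}^{-}=\bl{0}$) is always feasible, so the feasible region is nonempty; moreover the constraint $\sum_i|x_i-y_i|\le\pa$ confines $\bl{y}$ to a bounded $L_1$-ball about $\bl{x}$, so $\mathcal{S}_{\bl{x}}$ is a nonempty bounded polytope and the optimal value is finite. Hence strong duality legitimizes the dual \eqref{pro:Inc-LP*(LP)}.

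Next I would take the maximum over $\bl{c}\in\mathcal{U}_1$. Substituting $\bl{c}=\bar{\bl{c}}+\bls{\delta}$ with $\bl{0}\le\bls{\delta}\le\hat{\bl{c}}$ and $\sum_i\delta_i\le\Gamma$, the variable $\bls{\delta}$ enters the objective and constraints of \eqref{pro:Inc-LP*(LP)} linearly; since both the inner dual and the choice of $\bl{c}$ are maximizations, they collapse into the single maximization \eqref{pro:Adv-LP}. Because $\mathcal{U}_1$ is a nonempty bounded polytope (it contains $\bar{\bl{c}}$) and $\opt_{\inc}(\bl{x},\cdot)$ is finite on it for fixed $\bl{x}$, program \eqref{pro:Adv-LP} has a finite optimum, so a second application of strong duality yields the minimization \eqref{pro:Adv-LP*} for $\opt_{\adv}(\bl{x})$.

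Finally, I would merge the two minimizations: substituting \eqref{pro:Adv-LP*} into $\min_{\bl{x}\in\mathcal{S}}\bigl[\bl{d}^T\bl{x}+\opt_{\adv}(\bl{x})\bigr]$ places $\bl{x}$ and the variables $(\bl{z}^{+},\bl{z}^{-},\bl{q},\beta)$ under a common minimization, and adjoining the description $\mathcal{S}=\{\bl{x}\ge\bl{0}:\bl{A}\bl{x}=\bl{b}\}$ reproduces exactly the program in the statement. I expect the only genuine obstacle to be the feasibility-and-boundedness checks that license strong duality (feasibility of the incremental problem via $\bl{y}=\bl{x}$, boundedness via the $L_1$-ball, and the boundedness of $\mathcal{U}_1$); the collapse of nested optimizations of the \emph{same} direction into a single program is then routine. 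Polynomial solvability follows because the resulting program has $O(n+m)$ variables and constraints and can be solved by, e.g., the ellipsoid method or an interior-point method.
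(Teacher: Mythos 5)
Your proposal is correct and follows essentially the same route as the paper: dualize the incremental LP, absorb the maximization over $\mathcal{U}_1$ into that dual, dualize again to get a minimization expressing $\opt_{\adv}(\bl{x})$, and merge it with the outer minimization over $\bl{x}\in\mathcal{S}$. The only difference is that you explicitly verify the feasibility and boundedness conditions licensing the two applications of strong duality (via $\bl{y}=\bl{x}$ and the bounded $L_1$-ball), which the paper leaves implicit; this is a welcome addition rather than a deviation.
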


We next turn our attention to the  uncertainty set $\mathcal{U}_2$ and show that the robust incremental counterpart of a linear program in NP-hard. To this end, we considet the minimum cost flow problem. In this problem, we are given a directed graph $G=(N,A)$ with \emph{node set} $N$ and \emph{arc set} $A\subseteq N\times N$. Each arc $(i,j)\in A$ has an associated \emph{capacity}~$u_{ij}$ and an associated \emph{cost} $c_{ij}$.  The supply/demand of node $i$ is $b_i$. We assume that $\sum_{i\in N} b_i=0$. 

A \emph{flow} is a vector $\bl{x}\in \rr^{|A|}_+$ that assigns a nonnegative value $x_{ij}$ to arc $(i,j)$. We refer to $x_{ij}$ as the \emph{flow} on arc $(i,j)$. We require that a flow $\bl{x}$ obeys the \emph{capacity constraints} $x_{ij}\leq u_{ij}$,  $(i,j)\in A$ and satisfies the following \emph{flow conservation constraints}:
\begin{align*}
\sum_{j:(i,j)\in A}x_{ij}-\sum_{j:(j,i)\in A}x_{ji} &=b_i && \forall i\in N.
\end{align*}

The \emph{cost} of a flow $\bl{x}$ is given by $\sum_{(i,j)\in A} c_{ij}x_{ij}$. The \emph{minimum cost flow problem} is to find a feasible flow of minimum cost. This problem can be stated as follows:
\begin{align}
\label{pro:MCFP}
  \begin{aligned}
	&\min	&&\bl{c}^T\bl{x} \\
         &   \text{~s.t.}   &&\begin{aligned}[t]
        							\bl{N}\bl{x}  &=\bl{b},                \\
		  					 \bl{0}\leq \bl{x}               &\leq \bl{u},
       				   \end{aligned}
\end{aligned}
\end{align}
where $\bl{N}$ is the node-arc adjacency matrix of the graph $G$, $\bl{u}$ is the vector of arc capacities, and $\bl{b}$ is the vector of supplies or demands.

We suppose that arc costs are uncertain.  The cost of arc $(i,j)$ is in the interval $[\bar{c}_{ij},\bar{c}_{ij}+\hat{c}_{ij}]$. 
We assume that the total change permitted in the incremental stage is bounded by $\pa $. 
Given a flow $\bl{x}$, the \emph{adversarial} minimum cost flow problem is to find a cost vector $\bl{c}\in \mathcal{U}_1$ that maximizes the optimal value $\opt_{\inc}(\bl{x},\bl{c})$ of the incremental problem. 

In the case that the uncertainty set is  $\mathcal{U}_1$, the results of the previous section show that the robust incremental min cost flow problem can be formulated as a linear programming problem, and is thus solvable in polynomial time. 

We next show that the adversarial minimum cost flow problem is NP-hard under the uncertainty set $\mathcal{U}_2$. 

The adversarial problem is
to increase the costs of at most $\Gamma$ arcs such as to maximize the minimum cost flow. This problem can be formulated as follows:
 \begin{align}
  \label{pro:IR-MCF-Arcs}
\begin{aligned}
 && \max_{\bls{\delta}\in \Theta}~\min &\quad  \sum_{(i,j)\in A} (\bar{c}_{ij}x_{ij}+\delta_{ij}\hat{c}_{ij}x_{ij})\\
      &&        \text{~s.t.}&\quad\begin{aligned}[t]
         							\bl{N}\bl{x}  &=\bl{b},                \\
		  					 \bl{0}\leq \bl{x}               &\leq \bl{u},
       					\end{aligned}
\end{aligned}
\end{align}
where $\Theta:=\{\bls{\delta}\in \{0,1\}^{|A|}:~ \sum_{(i,j)\in A}\delta_{ij}\leq \Gamma\}$. 


We show that this problem is NP-hard by a transformation from the \emph{network interdiction} problem, which is defined as follows:  reduce the value of a maximum flow in a given network as much as possible by removing $\Gamma$ arcs. Wood \cite{Wood93} shows that this problem is NP-hard by a transformation from the clique problem.

\begin{theorem}
Under the uncertainty set  $\mathcal{U}_2$, the adversarial minimum cost flow problem is NP-hard.
\end{theorem}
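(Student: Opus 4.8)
The plan is to establish NP-hardness by a polynomial reduction from the \emph{network interdiction} problem, which Wood shows to be NP-hard: given a directed graph $G=(N,A)$ with source $s$, sink $t$, integral arc capacities $u_{ij}$, and an integer budget $\Gamma$, delete at most $\Gamma$ arcs so as to minimize the value of a maximum $s$--$t$ flow. First I would fix such an instance and compute, in polynomial time, the value $v^{*}$ of a maximum $s$--$t$ flow in $G$; this quantity calibrates the reduction.

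Next I construct an adversarial minimum cost flow instance of the form \eqref{pro:IR-MCF-Arcs} on the graph $G$ augmented by a single \emph{bypass} arc $(s,t)$. Every original arc $(i,j)$ keeps its capacity $u_{ij}$ and is assigned nominal cost $\bar{c}_{ij}=0$ together with a large increase $\hat{c}_{ij}=M$ (any $M>1$ suffices). The bypass arc is given capacity $v^{*}$, nominal cost $1$, and increase $0$, so the adversary never touches it. I set the supplies to $b_s=v^{*}$, $b_t=-v^{*}$, and $b_i=0$ otherwise, which forces exactly $v^{*}$ units of flow from $s$ to $t$ and makes the instance always feasible (everything can be routed on the bypass). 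Each choice $\bls{\delta}\in\Theta$ of at most $\Gamma$ arcs to increase then corresponds exactly to deleting those arcs in the interdiction instance.

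The heart of the argument is the identity
\[
\min\text{-cost-flow}(\bls{\delta}) \;=\; v^{*}-f(\bls{\delta}),
\]
where $f(\bls{\delta})$ denotes the maximum $s$--$t$ flow in $G$ after removing the arcs with $\delta_{ij}=1$. To see this, observe that any $s$--$t$ path using an increased arc costs at least $M>1$ per unit, whereas the bypass carries flow at cost $1$ per unit and has capacity $v^{*}$ that always suffices; hence in an optimal (integral) solution the decision maker routes no flow on increased arcs. It therefore pushes as much flow as possible through the cost-zero (non-increased) arcs, namely $f(\bls{\delta})$ units, and sends the remaining $v^{*}-f(\bls{\delta})$ units on the bypass, for total cost $v^{*}-f(\bls{\delta})$. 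A standard flow-decomposition argument, together with the capacity of the bypass, makes both the ``increased arcs carry no flow'' and the ``free arcs are saturated up to $f(\bls{\delta})$'' claims precise.

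Maximizing over $\bls{\delta}\in\Theta$ then yields optimal adversarial value $v^{*}-\min_{\bls{\delta}}f(\bls{\delta})=v^{*}-f^{*}$, where $f^{*}$ is the minimum maximum-flow achievable by deleting at most $\Gamma$ arcs. Consequently the adversary can guarantee value at least $v^{*}-k$ if and only if interdiction can reduce the maximum flow to at most $k$, so the decision versions coincide; since the construction is polynomial, NP-hardness of the adversarial minimum cost flow problem under $\mathcal{U}_2$ follows. I expect the main obstacle to be the exact correspondence $\min\text{-cost-flow}(\bls{\delta})=v^{*}-f(\bls{\delta})$: one must argue carefully, via flow decomposition and the bypass capacity, that an optimal flow never uses an increased arc and always saturates the free subnetwork, so that the cost reduces cleanly to the interdicted max-flow value.
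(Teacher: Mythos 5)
Your proof is correct, and it reduces from the same source problem as the paper (Wood's network interdiction problem), but the gadget and the key lemma are genuinely different. The paper's reduction uses the interdiction graph as-is: set $b_s=k$, $b_t=-k$, make every nominal cost $\bar{c}_{ij}=0$ and every increment $\hat{c}_{ij}=1$, and then observe that the adversarial optimum is strictly positive if and only if every flow of $k$ units must cross one of the $\Gamma$ increased arcs, i.e., if and only if interdiction can push the max flow below $k$. This is a purely qualitative (zero versus positive) test and, as a bonus, it establishes hardness even for completely uniform instances ($\bar{\bl{c}}=\bl{0}$, $\hat{\bl{c}}=\bl{1}$). Your construction instead augments the graph with a bypass arc of capacity $v^{*}$, cost $1$, and increment $0$, gives original arcs a big-$M$ increment, and proves the exact value identity $\min\text{-cost-flow}(\bls{\delta})=v^{*}-f(\bls{\delta})$, so that the adversarial optimum equals $v^{*}-f^{*}$. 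Your flow-decomposition argument for this identity is sound (paths through increased arcs cost at least $M>1$ per unit, free-arc paths total at most $f(\bls{\delta})$, and the bypass absorbs the remainder at unit cost), and the resulting threshold correspondence (adversarial value $\geq v^{*}-k$ iff $f^{*}\leq k$) is a valid polynomial reduction. What your extra machinery buys is a quantitative correspondence between the two optimal values and an instance that stays feasible at zero ambiguity about degenerate cases; what it costs is a heterogeneous instance (non-uniform $\bar{\bl{c}}$ and $\hat{\bl{c}}$), an extra max-flow computation to calibrate $v^{*}$, and a longer argument where the paper's positivity test suffices.
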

\begin{proof}
Consider the following decision version of the network interdiction  problem.  We are given a capacitated network $G=(N,A)$ with a source $s\in N$ and sink $t\in N$, an integer $\Gamma$, and an integer $k$.  The problem is to determine whether there exists a set of $\Gamma$ arcs so that after removing these arcs the maximum flow value is at most $k-1$. We now construct an instance of the adversarial minimum cost flow problem as follows. Suppose that we want to ship $k$ units of flow from $s$ to $t$ in the given network. We let $b_s=k$, $b_t=-k$, and $b_i=0$ for all other nodes. Moreover,  assume that the nominal costs $\bar{c}_{ij}$ are all zero and an adversary can increase the costs of at most $\Gamma$ arcs by one, that is,  $\bar{c}_{ij}=0$ and $\hat{c}_{ij}=1$ for all arcs $(i,j)\in A$.  

We next show that the network interdiction flow problem is a ``Yes''-instance if and only if the optimal value of the corresponding adversarial minimum cost flow problem is strictly positive. Suppose that the network interdiction flow problem is a ``Yes''-instance. Thus, there are $\Gamma$ arcs so that one cannot ship $k$ units of flow from $s$ to $t$ without sending flow on at least one of these arcs. This implies that if the adversary increases the cost of these $\Gamma$ arcs by one, then the cost for sending $k$ units of flow will be strictly positive.  Conversely, suppose that the optimal value of the adversarial minimum cost flow problem is strictly positive.   Let $S$ denote the set of $\Gamma$ arcs whose cost was increased from 0 to 1.  Then any flow of $k$ units from $s$ to $t$ must use at least one of these arcs. This completes the proof of the theorem. 
\end{proof}

The above theorem implies that under the uncertainty set $\mathcal{U}_2$, robust incremental network flows (and thus robust incremental linear programming) is NP-hard.  It is an open question whether the decision variant of robust incremental network flows  is in the class NP.  It is also open as to whether the problem is $\Sigma^p_2$-complete.

\section{Robust-Incremental Shortest Path Problem}
\label{sec:SP}
Here, we study the shortest path problem from a robust incremental viewpoint.  Let $G=(N,A)$ be a directed graph with a \emph{source} $s\in N$ and a \emph{sink} $t\in N$, and let $c_{ij}$ denote the cost (length) of arc $(i,j)\in A$. In the \emph{shortest path problem}, we seek a path of least cost from $s$ to $t$.  

We suppose that for each arc $(i,j)$, the cost $c_{ij}$ is uncertain and can vary within the interval $[\bar{c}_{ij},\bar{c}_{ij}+\hat{c}_{ij}]$. If the decision maker chooses a path $P^0$ in the first stage and observes the realization $\bl{c}$ of the cost vector, she is allowed to build a new path $P$ in the second stage whose distance from $P^0$ is not more than $\pa$ for some specified integer $K$.  We measure the distance of a path $P$ from the path $P^0$ via the three different incremental functions: $|P\setminus P^0|$, $|P^0\setminus P|$, and $|P\oplus P^0|$.  

If $K \ge 2n-2$ and if the uncertainty set is $\mathcal{U}_2$, the \emph{adversarial} shortest path problem is the problem of maximizing the length of the least cost path from $s$ to $t$ by increasing the cost of at most $\Gamma$ arcs. 

By setting $\hat{c}_{ij}$ to be very large, the adversarial shortest path problem reduces to the problem of determining the $\Gamma$ most vital arcs, i.e., those $\Gamma$ arcs whose removal results in the greatest increase in the length of the shortest path from $s$ to $t$. The latter problem is known to be NP-hard and inapproximable by a constant factor better than 2 \citep[see][]{Bar-NoyKhullerSchieber95}. This implies that the robust incremental shortest path problem is NP-hard with respect to all three incremental functions. In the rest of this section, we examine the robust incremental shortest path problem with respect  the uncertainty set $\mathcal{U}_1$, unless mentioned otherwise, and under the incremental functions $|P\setminus P^0|$ and $|P^0\oplus P|$.

\subsection{The incremental function $|P\setminus P^0|$}
Here, we assume that one can build a new path by adding at most $K$ new arcs to the path in the incremental stage. For this case, \c{S}eref \emph{et al.} \cite{SerefAhujaOrlin09} present a polynomial time algorithm for solving the incremental shortest path problem. We show that the adversarial problem is also solvable in polynomial time.  And we show that the robust incremental problem is NP-hard, even for the special case where $\Gamma=K=1$.

Let  $\opt_{\inc}(P^0,\bl{c})$ be the optimal value of the incremental optimization problem on $G = (N, A)$ with respect to a given path $P^0$ and a vector $\bl{c}$ of costs.  We next transform the incremental optimization problem on $G$ into a shortest path problem on a time expanded network $G^* = (N^*, A^*)$.  We create the time expanded network $G^*$  as follows.  
For each node $i \in N$ there are $K+1$ copies of the node in  $N^*$.  The copies are denoted as $i_{k}$, for $k=0,\ldots, K$.
  
There are three subsets of arcs.  For every arc $(i, j) \in P^0$, there are arcs 
$(i_{k}, j_{k})$, for $k=0,\ldots, K$.   For every arc $(i, j) \in A\setminus P^0$, there are arcs $(i_{k}, j_{k+1})$, for $k=0,\ldots, K-1$.  Finally, for every node $i \in N$, there are arcs $(i_{k}, i_{k+1})$, for $k=0,\ldots, K-1$.  Arcs in the first two sets have the same cost as in $G$.  Arcs in the third set have a cost of 0.  {The construction of the time expanded network is illustrated in Figure \ref{fix:EimeExp} for a simple network.} 

We see that the transformation is valid as follows.  A feasible path $P$ in the original network with $|P\setminus P^0|\leq K$ will induce a path $Q$ from $s_0$ to $t_K$ in the time expanded network so that the cost of $P$ and $Q$ are the same.  Similarly, a feasible path $Q$ from $s_0$ to $t_K$ in the time expanded network induces a path $P$ in $G$ whose cost is the same as that of $Q$, and such that  $|P\setminus P^0|\leq K$.

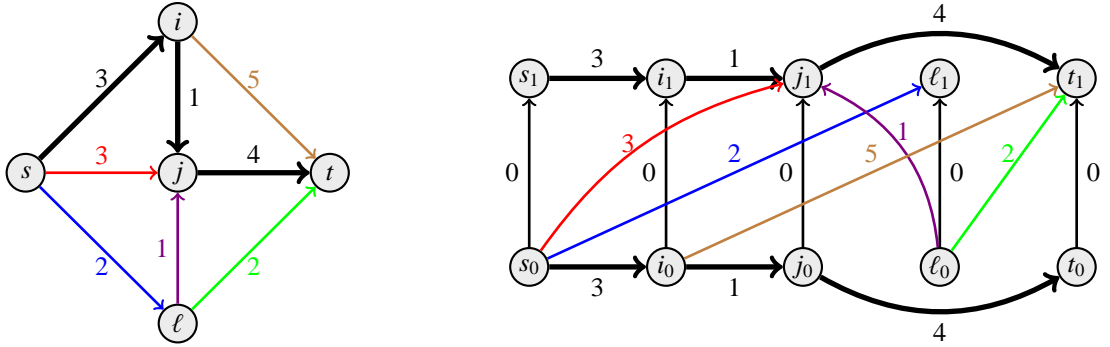
\begin{figure}[t]
\begin{minipage}{0.5\textwidth}
\begin{center}
\begin{tikzpicture}[inner sep=1mm]
  \node (1) at ( -2,0) [place] {$s$};
  \node (2) at ( 0,2) [place] {$i$};
  \node (3) at ( 0,-2) [place] {$\ell$};
  \node (4) at ( 2,0) [place] {$t$};
  \node (5) at ( 0,0) [place] {$j$};
\begin{scope}[color=black,line width=1pt]
  \draw [->,line width=2pt] (1) -- (2) node [above,text centered,midway]{3};
  \draw[blue,->] (1) -- (3) node [below,text centered,midway]{2};
  \draw [->,line width=2pt] (2) -- (5) node [right,text centered,midway]{1};
  \draw [brown,->] (2) -- (4) node [above,text centered,midway]{5};
  \draw [green,->] (3) -- (4) node [below,text centered,midway]{2};
  \draw [violet,->] (3) -- (5) node [left,text centered,midway]{1};
  \draw [red,->] (1) -- (5) node [above,text centered,midway]{3};
  \draw [->,line width=2pt] (5) -- (4) node [above,text centered,midway]{4};
\end{scope}
\end{tikzpicture}
\end{center}
\end{minipage}%
\begin{minipage}{0.5\textwidth}
\begin{center}
\begin{tikzpicture}
  \node (1) at ( 0,0) [place] {$s_0$};
  \node (2) at ( 1.8,0) [place] {$i_0$};
  \node (3) at ( 3.6,0) [place] {$j_0$};
  \node (4) at ( 5.4,0) [place] {$\ell_0$};  
  \node (5) at ( 7.2,0) [place] {$t_0$};
  \node (6) at ( 0,2.5) [place] {$s_1$};
  \node (7) at ( 1.8,2.5) [place] {$i_1$};
  \node (8) at ( 3.6,2.5) [place] {$j_1$};
  \node (9) at ( 5.4,2.5) [place] {$\ell_1$};
  \node (10) at (7.2,2.5) [place] {$t_1$};
\begin{scope}[color=black,line width=1pt]
  \draw [->] (1) -- (6) node [left,text centered,midway]{0};
  \draw [->] (2) -- (7) node [left,text centered,midway]{0};
  \draw [->] (3) -- (8) node [left,text centered,midway]{0};
  \draw [->] (4) -- (9) node [right,text centered,midway]{0};
  \draw [->] (5) -- (10) node [right,text centered,midway]{0};
  \draw [->,line width=2pt] (1) -- (2) node [below,text centered,midway]{3};
  \draw [->,line width=2pt] (2) -- (3) node [below,text centered,midway]{1};
  \draw [->,line width=2pt] (6) -- (7) node [above,text centered,midway]{3};
  \draw [->,line width=2pt] (7) -- (8) node [above,text centered,midway]{1};
  \draw [red,->] (1)  to [bend left=20] node [left,text centered,midway]{3}  (8);
  \draw [blue,->] (1)  to [bend right=0]  node [above,text centered,midway]{2} (9);
  \draw [green,->] (4) -- (10) node [above,text centered,midway]{2};
  \draw [violet,->] (4)  to [bend right=30]  node [above,text centered,midway]{1} (8);
  \draw [brown,->] (2) -- (10) node [above,text centered,midway]{5};
  \draw [->,line width=2pt] (3) to [bend right=30]  node [below,midway]{4}(5);
  \draw [->,line width=2pt] (8) to [bend left=30]  node [above,text centered,midway]{4}(10);
\end{scope}
\end{tikzpicture}
\end{center}
\end{minipage}%
\caption{\label{fix:EimeExp} {On the left hand side, a network $G$ is shown. Let $P^0=s,i,j,t$ and let $K=1$. On the right hand side, the corresponding time expanded network $G^*$ is depicted. The number on the arc indicates the cost. }}
\end{figure}

We have expressed the incremental optimization problem as a shortest path problem on the time expanded network.  {The latter problem is formulated as follows: }
\begin{align}
\label{pro:time-exp}
  \begin{aligned}
	& \min	&&\sum_{k=0}^{K}~\sum_{(i,j)\in P^0} c_{ij}x_{ij}^k+\sum_{k=0}^{K}~\sum_{(i,j)\in A\setminus P^0} c_{ij}x_{ij}^k\\
          		  		& \text{~s.t.}   &&\begin{aligned}[t]
        							x^k_{ii}+\sum_{j:(i,j)\in P^0}x^k_{ij}+\sum_{j:(i,j)\in A\setminus P^0}x^k_{ij}~~~&\\  -x_{ii}^{k-1}-\sum_{j:(j,i)\in P^0}x^{k}_{ji}-\sum_{j:(j,i)\in A\setminus P^0}x^{k-1}_{ji}&=\begin{cases}
            1  & \text{if } i=s,k=0 \\
            -1 & \text{if } i= t,k=K \\
            0 & \text{otherwise} \\
          \end{cases}\quad ~\forall i\in N,~k=0,\ldots,K,\\
          x^k_{ij}&\in \{0,1\} \quad\quad \forall (i,j)\in P^0,~k=0,\ldots,K,\\
          x^k_{ij}&\in \{0,1\} \quad\quad \forall (i,j)\in A\setminus P^0,~ k=0,\ldots,K-1,\\
          x^k_{ii}&\in \{0,1\} \quad\quad \forall i\in N,~k=0,\ldots,K-1,\\
          x^K_{ij}&=0 \quad\quad\quad\quad \forall (i,j)\in A\setminus P^0,\\
          x^K_{ii}&=0\quad\quad\quad\quad \forall i\in N.
  \end{aligned}
\end{aligned}
\end{align}
In this problem, there are three subsets of decision variables corresponding to three subsets of arcs in the time expanded network. For every $(i,j)\in P^0$ and $k=0,\ldots,K$, the decision variable $x^k_{ij}$ corresponds to the arc from node $i_k$ to node $j_{k}$. For every $(i,j)\in A\setminus P^0$ and $k=0,\ldots,K-1$, the decision variable $x^k_{ij}$ corresponds to the arc from node $i_k$ to node $j_{k+1}$. For every $i\in N$ and $k=0,\ldots,K-1$, the decision variable $x^k_{ii}$ corresponds to the arc from node $i_k$ to node $i_{k+1}$. There is a one-to-one correspondence between feasible solutions of Problem~\eqref{pro:time-exp} and the paths from node $s_0$ to node $t_K$. We have included dummy variables $x_{ij}^K$ for $(i,j)\in A\setminus P^0$ and $x_{ii}^K$ for $i\in N$ to simplify the formulation, and then we let them to be zero.

 {We can relax the binary variables in Problem~\eqref{pro:time-exp}. By considering the dual problem, we obtain the following formulation for the incremental shortest path problem:}
\begin{align}
\label{pro:time-exp2}
  \begin{aligned}
	\opt_{\inc}(P^0,\bl{c}) = 	& \max	&&\pi_{s}^0-\pi_{t}^K \\
          		  		& \text{~s.t.}   &&\begin{aligned}[t]
        							\pi_{i}^k-\pi_{j}^k  &\leq c_{ij} && \forall (i,j)\in P^0,~k=0,\ldots,K,\\
        							\pi_{i}^k-\pi_{j}^{k+1}  &\leq c_{ij} && \forall (i,j)\in A\setminus P^0,~k=0,\ldots,K-1,\\
							\pi_{i}^k-\pi_{i}^{k+1}  &\leq 0 && \forall i\in N,~k=0,\ldots,K-1.
							  \end{aligned}
\end{aligned}
\end{align}

 {The adversarial problem is to find a cost vector $\bl{c} \in \mathcal{U}_1$ to maximize $\opt_{\inc}(P^0,\bl{c})$. This implies that the adversarial problem turns into a linear program as follows:}
\begin{align}
\label{pro:SPP-incopt}
\begin{aligned}
	\opt_{\adv}(P^0) =& \max	&&\pi_{s}^0-\pi_{t}^K\\
          		  		& \text{~s.t.}   &&\begin{aligned}[t]
        							\pi_{i}^k-\pi_{j}^k-\delta_{ij}  &\leq \bar{c}_{ij} && \forall (i,j)\in P^0,~k=0,\ldots,K,\\
        							\pi_{i}^k-\pi_{j}^{k+1}-\delta_{ij}  &\leq \bar{c}_{ij} && \forall (i,j)\in A\setminus P^0,~k=0,\ldots,K-1,\\
							\pi_{i}^k-\pi_{i}^{k+1}  &\leq 0 && \forall i\in N,~k=0,\ldots,K-1,\\
									 0\leq \delta_{ij}&\leq \hat{c}_{ij}  &&  \forall (i,j)\in A.
								\end{aligned}
\end{aligned}
\end{align}
 {Notice that this problem has $Kn+m$ variables and at most $(K+2)m+nK$ constraints. We assume without loss of generality that $K\leq n-1$ since otherwise the adversarial shortest path problem reduces to a nominal problem. This leads to the following result. }

\begin{theorem}
The adversarial shortest path problem under the uncertainty set $\mathcal{U}_1$ can be formulated as a linear program, and is solvable in polynomial time. 
\end{theorem}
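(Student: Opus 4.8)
The plan is to assemble the ingredients already developed in this subsection into a single polynomial-size linear program, and then invoke polynomial-time solvability of linear programming. The workflow has three stages: (i) express the inner incremental problem $\opt_{\inc}(P^0,\bl{c})$ exactly as a linear program, (ii) dualize it so that it becomes a \emph{maximization}, and (iii) fold the adversary's maximization over $\bl{c}\in\mathcal{U}_1$ into that same maximization, producing one jointly-feasible LP.

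First I would record that the time-expanded transformation reduces the incremental problem to a shortest path problem on $G^*$, and that its integer formulation \eqref{pro:time-exp} has an exact linear relaxation: the constraint matrix is the node--arc incidence matrix of $G^*$, hence totally unimodular, and the right-hand side is integral, so every basic optimal solution of the relaxation is integral and the relaxation value equals $\opt_{\inc}(P^0,\bl{c})$. Taking the LP dual of this (minimization) relaxation yields the node-potential formulation \eqref{pro:time-exp2}, and strong duality certifies that its optimal value is again $\opt_{\inc}(P^0,\bl{c})$. (The time-expanded network is acyclic since $P^0$ is simple and the remaining arcs only advance the layer index, so the shortest path value is finite and duality applies.)

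The key step, and the one I expect to be the crux, is the collapse of the adversary's max-min problem into a single maximization. Because \eqref{pro:time-exp2} is itself a maximization over the potentials $\bls{\pi}$, and because each cost enters through $c_{ij}=\bar{c}_{ij}+\delta_{ij}$ with $\bls{\delta}$ constrained only by the \emph{linear} inequalities defining $\mathcal{U}_1$ (namely $\bl{0}\le\bls{\delta}\le\hat{\bl{c}}$ and $\sum_{(i,j)}\delta_{ij}\le\Gamma$), the outer maximization over $\bl{c}$ and the inner maximization over $\bls{\pi}$ merge into one maximization over the joint variables $(\bls{\pi},\bls{\delta})$. Substituting $c_{ij}$ into the potential constraints moves each $\delta_{ij}$ to the left-hand side and reproduces the linear program \eqref{pro:SPP-incopt}, to which one also appends the budget row $\sum_{(i,j)}\delta_{ij}\le\Gamma$ inherited from $\mathcal{U}_1$. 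It is essential that the inner problem was first turned into a maximization: had it remained a minimization, the combined program would be a genuine max-min rather than a single LP. This is exactly the point at which the discrete set $\mathcal{U}_2$ resists the same reduction, which is consistent with the hardness results obtained elsewhere in the paper.

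Finally I would invoke the size bounds already noted for \eqref{pro:SPP-incopt}: it has $Kn+m$ variables and at most $(K+2)m+nK$ constraints. Under the standing assumption $K\le n-1$, both counts are polynomial in $n$ and $m$, so the LP has polynomial size. Since a linear program of polynomial size is solvable in polynomial time (for instance by the ellipsoid or interior-point methods), the adversarial shortest path problem under $\mathcal{U}_1$ is solvable in polynomial time, which completes the argument.
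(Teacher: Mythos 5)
Your proposal is correct and follows essentially the same route as the paper: time-expanded network, exact LP relaxation of the incremental shortest path problem, dualization to the node-potential form \eqref{pro:time-exp2}, merging the adversary's maximization over $\mathcal{U}_1$ with the inner maximization into the single polynomial-size LP \eqref{pro:SPP-incopt}. Your explicit justifications (total unimodularity, acyclicity of $G^*$) and your inclusion of the budget row $\sum_{(i,j)}\delta_{ij}\le\Gamma$ — which the paper's displayed formulation \eqref{pro:SPP-incopt} appears to omit, evidently by oversight — only make the argument tighter.
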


Bertsimas and Sim \cite{BertsimasSim03} showed that the robust shortest path problem is solvable in polynomial time. We next prove that decision variant of the robust incremental shortest path problem is NP-complete. The transformation is from the \emph{2-disjoint-paths problem}, which is as follows. Given given a directed graph $G=(N,A)$ and distinct nodes $s_1, s_2, t_1, t_2$, are there two disjoint paths $P_1$ and $P_2$ such that $P_1$ is from $s_1$ to $t_1$ and $P_2$ is from $s_2$ to $t_2$?  This problem is shown to be NP-complete by Fortune \emph{et al.} \cite{Fortune80}.

\begin{theorem}
\label{thm:IR-SPP}
 {It is NP-hard to approximate the robust incremental shortest path problem within a factor of 2 if the uncertainty set is $\mathcal{U}_1$ and the distance metric is with respect to inclusion. }
\end{theorem}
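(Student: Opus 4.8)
The plan is to reduce from the directed 2-disjoint-paths problem (NP-complete by Fortune \emph{et al.}) and to engineer a gadget that exhibits a factor-2 gap. Throughout I fix $\Gamma=K=1$ and the uncertainty set $\mathcal{U}_1$, and take $\bl{d}=0$, so that the objective of Problem~\eqref{pro:IncRobOpt} is simply $\opt_{\adv}(P^0)$ minimized over the choice of initial path $P^0$; hardness of this special case suffices. The first step is to record the structural meaning of a recourse under the inclusion metric with $K=1$. If $P^0=(s=v_0,v_1,\dots,v_\ell=t)$, then any feasible recourse path $P$ with $|P\setminus P^0|\le 1$ either equals $P^0$ or is obtained by replacing one contiguous subpath $v_i,\dots,v_j$ of $P^0$ with a single chord arc $(v_i,v_j)\in A$. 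Indeed, the one arc of $P$ that lies outside $P^0$ must have both endpoints on $P^0$, since the remaining arcs of $P$ are $P^0$-arcs and can rejoin only at vertices of $P^0$. Hence, against an adversary who (because $\Gamma=1$) concentrates its budget on a single arc $e$, the decision maker's only defense is to shortcut, by one available chord, a contiguous segment of $P^0$ that contains $e$.

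Next I would describe the reduction. Given $(G,s_1,t_1,s_2,t_2)$, I build an $s$--$t$ network $H$ that embeds $G$ and forces every reasonable initial path to commit, in series, to two vulnerable arcs $a_1$ and $a_2$, where $a_1$ can be bypassed only by a chord routing through the $s_1$--$t_1$ portion of $G$ and $a_2$ only by a chord routing through the $s_2$--$t_2$ portion. I set the nominal costs $\bar{\bl{c}}$ to $0$ on the embedded $G$-arcs and on the connectors, and give $a_1,a_2$ an increment $\hat{\bl{c}}$ large enough that raising one of them makes using it cost, up to the chosen gap factor, more than the available detour. The bound $\Gamma=1$ lets the adversary strike exactly one of $a_1,a_2$, and $K=1$ lets the decision maker install exactly one bypass chord after seeing the strike. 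Since $P^0$ is committed before the strike, it must already admit a cheap bypass for \emph{either} possible target; the crux of the encoding is that the two bypasses can coexist along one committed route in $H$ precisely when the corresponding $s_1$--$t_1$ and $s_2$--$t_2$ routes are vertex-disjoint in $G$.

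I would then establish both directions together with the gap. For the forward direction, if disjoint paths $P_1,P_2$ exist I exhibit a specific $P^0$ whose worst-case recourse cost equals a small value $V$: whichever of $a_1,a_2$ is raised, the disjoint route supplies a chord bypassing it, and the resulting path costs $V$. For the reverse direction, if no two disjoint paths exist I argue that for every $P^0$ the adversary can name a vulnerable arc whose only single-chord bypass is unavailable, because the route it would need must cross the route already consumed on the other side; the best recourse is then forced to pay the raised cost, giving value at least $2V$. Thus the robust incremental optimum is $V$ on a ``Yes''-instance and at least $2V$ otherwise, and any polynomial algorithm approximating it within a factor smaller than $2$ would separate the two cases and decide 2-disjoint-paths, proving the stated inapproximability.

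The main obstacle is the reverse direction: showing that the absence of two disjoint paths genuinely defeats every committed initial path against the single-arc recourse. This is exactly where the gadget must be engineered with care, so that the property ``one chord repairs $a_1$ and one chord repairs $a_2$, both available along a single committed $P^0$'' is equivalent, by a Menger-type disjointness argument inside $H$, to the existence of the two vertex-disjoint routes in $G$ — with no unintended cheap path slipping through $H$ that would let the decision maker escape when the routes are not disjoint. Verifying that $\mathcal{U}_1$ offers the adversary no advantage in spreading its unit budget (i.e.\ that concentrating on a single arc is optimal, so the $\mathcal{U}_1$ analysis coincides with the combinatorial one) is the remaining technical point.
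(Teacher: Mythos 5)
Your proposal has the right starting point (a reduction from 2-disjoint-paths with $\Gamma=K=1$ and $\bl{d}=\bl{0}$, plus a correct structural lemma that a $K=1$ recourse can only replace a contiguous segment of $P^0$ by a single chord whose endpoints lie on $P^0$), but the gadget itself is missing, and the sketch you give of it is inverted in a way that your own lemma rules out. You have the initial path commit to two vulnerable arcs $a_1,a_2$ and propose that $a_1$ be bypassed by a ``chord routing through the $s_1$--$t_1$ portion of $G$''; but a detour through the embedded copy of $G$ consists of many arcs, none of which lie on $P^0$, so it is unavailable under $K=1$. With that gadget every committed path pays the raised cost whether or not the disjoint paths exist, and the reduction collapses. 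The working construction is the mirror image: the initial path routes \emph{through} $G$ along the two (putatively disjoint) paths $P_1,P_2$, the single chord arcs $(s'_1,t'_1)$ and $(s'_2,t'_2)$ serve as the bypasses, and --- crucially --- the two halves are joined by a \emph{pair of parallel arcs} $(t'_1,s'_2)$, so that no concentrated unit increase can sever the middle connection.

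The second, more fundamental gap is your closing ``remaining technical point'': you plan to verify that under $\mathcal{U}_1$ the adversary loses nothing by concentrating its unit budget on a single arc. For the construction that works, this is false, and its falsity is exactly where the factor $2$ comes from. In the yes case, any concentrated attack is fully repaired at cost $0$ (insert one chord, or swap the middle arc for its twin); the adversary's optimal play is to \emph{split} the budget $0.5/0.5$ between an arc of $P_1$ and the middle arc used by $P^0$, forcing value exactly $0.5$. In the no case the adversary does concentrate a full unit on an arc ($(s'_1,t'_1)$, $(s'_2,t'_2)$, or $(t_2,t'_2)$, depending on the shape of $P^0$) that every feasible recourse must traverse, forcing value $1$. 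The gap $1/0.5=2$ exists only because of this splitting: if concentration were optimal, yes-instances would have value $0$, your ``$V$ versus $2V$'' comparison would be vacuous at $V=0$, and what you would be proving is the $\mathcal{U}_2$ statement of Theorem~\ref{thm:IR-SPP1}, not the $\mathcal{U}_1$ inapproximability claimed here. So both the gadget (in particular the parallel-arc trick) and the continuous-budget analysis need to be reworked; the reverse direction you defer is then settled by a case analysis on whether $P^0$ passes through $t'_1$ (forcing it to contain one of the chords) or avoids it (forcing it to contain $(t_2,t'_2)$ with no one-arc route back to $s'_2$).
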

\begin{proof}
Given an instance of the 2-disjoint-paths problem, we construct a graph $G'=(N',A')$ as follows. We create four nodes $s'_1, s'_2, t'_1, t'_2$, and  arcs $(s'_1,s_1)$, $(s'_1,t'_1)$, $(t_1,t'_1)$, $(s'_2,s_2)$, $(s'_2,t'_2)$, and $(t_2,t'_2)$. We furthermore link node $t'_1$ to node $s'_2$ with two parallel arcs. The construction of graph $G'$ is illustrated in Figure~\ref{fig:RI-SPP}.  
Suppose that the costs $d_{ij}$ for the initial path and the nominal arc costs $\bar{c}_{ij}$ are all zero, and suppose that an adversary can increase the cost of each arc by one.  Furthermore, let $\Gamma=K=1$.

We prove the following: (i) if the 2-disjoint-paths problem is a ``yes'' instance, then there exists a path $P^0$ from node $s'_1$ to node $t'_2$ in graph $G'$ with $\opt_{\adv}(P^0)=0.5$, where $\opt_{\adv}(P^0)$ is the optimal value of the adversarial optimization problem with respect to $P^0$; and (ii)  if the 2-disjoint-paths problem is a ``no'' instance, then $\opt_{\adv}(P^0) = 1$ for all paths $P^0$ from node $s'_1$ to node $t'_2$.   

We first suppose that there are two node-disjoint paths $P_1$ and $P_2$ from $s_1$ to $t_1$ and from $s_2$ to $t_2$, respectively. Let path $P^0$ be defined as follows:
\begin{align*}
P^0:=(s'_1,s_1),P_1,(t_1,t'_1), (t'_1,s'_2),(s'_2,s_2),P_2, (t_2,t'_s).
\end{align*}

If the adversary were to modify the cost of one arc only, then the best incremental path would have a cost of 0.  (This is easily seen by enumeration).  Thus, the adversary needs to modify the costs of at least two arcs.  An optimal choice for the adversary is to increase the costs of an arc of $P_1$ (or $P_2$) and the cost of the arc $(t'_1,s'_2)$ used in $P^0$, each with 0.5. Then an optimal incremental path would be to replace the arc $(t'_1,s'_2)$ used in $P^0$ by the other parallel arc $(t'_1,s'_2)$, which gives a cost of 0.5.

We now assume that the 2-disjoint-paths problem is a ``no'' instance and prove that $\opt_{\adv}(P^0)=1$ for every path $P^0$ from node $s'_1$ to node $t'_2$.  Consider an arbitrary path $P^0$. We consider first the case in which 
$P^0$ passes through node $t'_1$ (as well as node $s'_2$). In this case, it must contain arc $(s'_1,t'_1)$ or arc $(s'_2,t'_2)$.  We assume without loss of generality that it is arc $(s'_1,t'_1)$.  In this case, the adversary increases the cost of  arc $(s'_1,t'_1)$ by 1.   The incremental path must include this arc, and so has a cost of 1. 

Next, we assume that the path $P^0$ does not pass through node $t'_1$. In this case, path $P^0$ must contain the arc $(t_2,t'_2)$.   In this case, the adversary increases the cost of arc $(t_2,t'_2)$ by 1.  Every path $P$ from node $s'_1$ to node $t'_2$ with $|P\setminus P^0|\leq 1$ must contain this arc as well, which implies that $\opt_{\adv}(P^0)=1$.  This establishes the proof of the theorem.
  
\end{proof}

\begin{figure}[t]
    \centering
    \begin{tikzpicture}[inner sep=0.4mm]
  \node (1) at (0,0) [place] {$s'_1$};
  \node (2) at (1,0) [place] {$s_1$};
  \node (3) at (3,0) [place] {$i$};
  \node (4) at (5,0) [place] {$t_1$};
  \node (5) at (6,0) [place] {$t'_1$};
  \node (6) at (8,0) [place] {$s'_2$};
  \node (7) at (9,0) [place] {$s_2$};
  \node (8) at (11,0) [place] {$j$};
  \node (9) at (13,0) [place] {$t_2$};
  \node (10) at (14,0) [place] {$t'_2$};
  \node (11) at (1.8,0.5) [] {};
  \node (12) at (1.8,-0.5) [] {};  
  \node (13) at (2.2,0.5) [] {};
  \node (14) at (2.2,-0.5) [] {};  
  \node (15) at (3.8,0.5) [] {};
  \node (16) at (3.8,-0.5) [] {};  
  \node (17) at (4.2,0.5) [] {};
  \node (18) at (4.2,-0.5) [] {};  
    \node (19) at (9.8,0.5) [] {};
  \node (20) at (9.8,-0.5) [] {};  
  \node (21) at (10.2,0.5) [] {};
  \node (22) at (10.2,-0.5) [] {};  
  \node (23) at (11.8,0.5) [] {};
  \node (24) at (11.8,-0.5) [] {};  
  \node (25) at (12.2,0.5) [] {};
  \node (26) at (12.2,-0.5) [] {}; 
\begin{scope}[color=black,line width=1pt]
  \draw [->] (1)   --  (2) node [above, sloped,midway]{};
  \draw [->] (4)   --  (5) node [above, sloped,midway]{};
  \draw [->] (5)  to [bend right=45]   (6) node [above, sloped,midway]{};
  \draw [->] (5)   to [bend left=45]   (6) node [above, sloped,midway]{};
  \draw [->] (1)  to [bend left=65]   (5) node [above, sloped,midway]{};
  \draw [->] (6)   to [bend left=65]   (10) node [above, sloped,midway]{};
  \draw [->] (6)   --  (7) node [above, sloped,midway]{};
  \draw [->] (9)   --  (10) node [above, sloped,midway]{};
  \draw [->] (2)   --  (11) node [above, sloped,midway]{};
  \draw [->] (2)   --  (12) node [above, sloped,midway]{};
  \draw [->] (13)   --  (3) node [above, sloped,midway]{};
  \draw [->] (14)   --  (3) node [above, sloped,midway]{};
  \draw [->] (3)   --  (15) node [above, sloped,midway]{};
  \draw [->] (3)   --  (16) node [above, sloped,midway]{};
  \draw [->] (17)   --  (4) node [above, sloped,midway]{};
  \draw [->] (18)   --  (4) node [above, sloped,midway]{};
 \draw [->] (7)   --  (19) node [above, sloped,midway]{};
  \draw [->] (7)   --  (20) node [above, sloped,midway]{};
  \draw [->] (21)   --  (8) node [above, sloped,midway]{};
  \draw [->] (22)   --  (8) node [above, sloped,midway]{};
  \draw [->] (8)   --  (23) node [above, sloped,midway]{};
  \draw [->] (8)   --  (24) node [above, sloped,midway]{};
  \draw [->] (25)   --  (9) node [above, sloped,midway]{};
  \draw [->] (26)   --  (9) node [above, sloped,midway]{};
\end{scope}
\end{tikzpicture}
\caption{\label{fig:RI-SPP} Construction of $G'$ from $G$. The graph $G'$ is constructed from $G$ by introducing fours additional nodes $s'_1,s'_2,t'_1,t'_2$ and additional arcs $(s'_1,s_1)$,  $(s'_2,t'_2)$, $(s'_1,s_1)$,  $(t_2,t'_2)$ and two parallel arcs from $t'_1$ to $s'_2$. }
\end{figure}
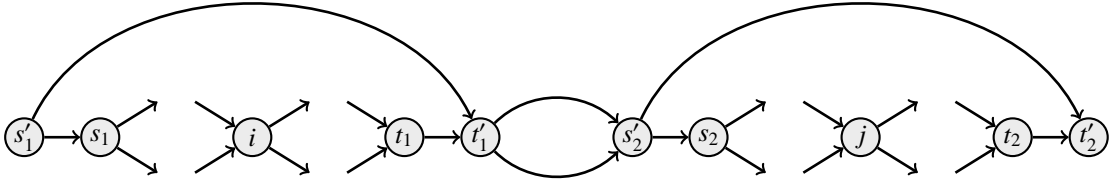

We have already mentioned that the robust incremental shortest path problem is NP-hard under the uncertainty set $\mathcal{U}_2$ since it includes as special case the problem of finding the  $K$ mostvital arcs. However, for a fixed $K$, and in particular $K=1$, the latter problem can be solved in polynomial time. The next result shows that the robust incremental shortest path problem is NP-hard with respect to the uncertainty set $\mathcal{U}_2$, even for $\Gamma=K=1$. 

\begin{theorem}
\label{thm:IR-SPP1}
The robust incremental shortest path problem is NP-hard under the uncertainty set $\mathcal{U}_2$. 
\end{theorem}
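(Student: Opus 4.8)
The plan is to reduce from the 2-disjoint-paths problem, reusing the graph $G'$ constructed in the proof of Theorem~\ref{thm:IR-SPP} (see Figure~\ref{fig:RI-SPP}) but re-analyzing it under the discrete set $\mathcal{U}_2$ with $\Gamma=K=1$ and the inclusion distance. As there, I would set $d_{ij}=\bar c_{ij}=0$ and $\hat c_{ij}=1$ for every arc, take $s'_1$ as the source and $t'_2$ as the sink, and keep the two parallel copies of $(t'_1,s'_2)$. One preliminary normalization is important: I would first assume, without loss of generality, that in the 2-disjoint-paths instance $s_1,s_2$ are sources and $t_1,t_2$ are sinks of $G$ (introduce dummy terminals if necessary). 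This forces any $s_1$--$t_1$ walk inside $G$ to avoid $s_2$ and $t_2$, and symmetrically, which is exactly what rules out the spurious detours discussed below.

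The key observation about the recourse is that, under the inclusion metric with $K=1$, a feasible second-stage path $P$ is $P^0$ with at most one arc outside $P^0$ spliced in: writing $P^0=p_0,\dots,p_\ell$, the decision maker may replace a contiguous stretch $p_a,\dots,p_b$ by a single chord $(p_a,p_b)\in A'\setminus P^0$. Hence if the adversary raises one arc $(p_i,p_{i+1})$ of $P^0$ by $1$, the decision maker neutralizes it at zero cost exactly when $P^0$ has a chord $(p_a,p_b)$ with $a\le i<i+1\le b$, and otherwise pays exactly $1$ (since $\hat c\equiv 1$, retaining $P^0$ costs $1$ and is optimal).

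For the ``yes'' direction I would take $P^0=(s'_1,s_1),P_1,(t_1,t'_1),(t'_1,s'_2),(s'_2,s_2),P_2,(t_2,t'_2)$ built from two node-disjoint paths $P_1,P_2$, as in Theorem~\ref{thm:IR-SPP}. The three chords missing from $P^0$, namely $(s'_1,t'_1)$, $(s'_2,t'_2)$, and the unused parallel copy of $(t'_1,s'_2)$, together span every arc of $P^0$: the first covers the entire $s'_1$--$t'_1$ stretch, the second the entire $s'_2$--$t'_2$ stretch, and the twin covers the middle arc. Since the adversary may raise only one arc, that arc is spanned by one of these chords, so $\opt_{\adv}(P^0)=0$ and $\opt_{\robinc}=0$. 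This is precisely where $\mathcal{U}_2$ departs from $\mathcal{U}_1$: in the continuous case the adversary split a unit of budget over two arcs needing different repairs and forced $0.5$, whereas a single integral raise is always repairable.

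For the ``no'' direction I would show that every $s'_1$--$t'_2$ path $P^0$ uses at least one of the shortcut arcs $(s'_1,t'_1)$ or $(s'_2,t'_2)$. Indeed every such path passes through one of the parallel arcs $t'_1\to s'_2$ (these two arcs form an $s'_1$--$t'_2$ cut), reaches $t'_1$ either via $(s'_1,t'_1)$ or via an $s_1$--$t_1$ path in $G$, and leaves $s'_2$ either via $(s'_2,t'_2)$ or via an $s_2$--$t_2$ path in $G$; if it used neither shortcut, these two $G$-subpaths would be node-disjoint, contradicting the ``no'' assumption. The adversary then raises whichever shortcut $P^0$ uses, and one checks it has no spanning chord: the only other arc leaving $s'_1$ is $(s'_1,s_1)$, whose head $s_1$ is not on $P^0$, and the only other arc entering $t'_2$ is $(t_2,t'_2)$, whose tail $t_2$ cannot lie on $P^0$ before $s'_2$ by the source/sink normalization. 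Thus $\opt_{\adv}(P^0)=1$ for all $P^0$, giving $\opt_{\robinc}=1$. Since $\opt_{\robinc}=0$ in the ``yes'' case and $1$ in the ``no'' case, solving the problem decides 2-disjoint-paths, which is NP-complete by Fortune \emph{et al.}~\cite{Fortune80}, so the problem is NP-hard. I expect the ``no'' direction---verifying that the raised shortcut admits no one-arc bypass---to be the main obstacle, since a careless construction lets a wandering $P^0$ reach $t_2$ early and detour through $(t_2,t'_2)$ for free; the source/sink normalization of $G$ is exactly the device that closes this gap.
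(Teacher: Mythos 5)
Your ``no'' direction has a genuine gap: the claim that the two parallel arcs $(t'_1,s'_2)$ form an $s'_1$--$t'_2$ cut in $G'$ is false. The graph $G'$ also admits paths of the form $s'_1\to s_1\leadsto t_2\to t'_2$, where $s_1\leadsto t_2$ is a path inside $G$; such a path exists whenever $G$ connects $s_1$ to $t_2$, and nothing about a ``no'' instance of 2-disjoint-paths forbids that. Your source/sink normalization does not exclude it either: making $s_1,s_2$ sources and $t_1,t_2$ sinks prevents a walk from passing \emph{through} a terminal, but not a path \emph{from} $s_1$ \emph{to} $t_2$. Consequently ``every $P^0$ uses one of the shortcuts'' is not established (and is false in general), and for such a bypass path your adversary strategy --- ``raise whichever shortcut $P^0$ uses'' --- is undefined. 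The hole is local and closes with your own chord machinery: for $P^0=(s'_1,s_1),Q,(t_2,t'_2)$ with $Q\subseteq G$, let the adversary raise $(s'_1,s_1)$; the only other arc leaving $s'_1$ is $(s'_1,t'_1)$, whose head $t'_1$ does not lie on $P^0$, so no chord spans the first arc of $P^0$, no single added arc yields a zero-cost path, and $\opt_{\adv}(P^0)=1$.

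Two remarks for perspective. First, the paper's own proof of Theorem~\ref{thm:IR-SPP1} has the same omission: from ``$P^0$ contains neither $(s'_1,t'_1)$ nor $(s'_2,t'_2)$'' it concludes directly that there are paths from $s_1$ to $t_1$ and from $s_2$ to $t_2$, which ignores exactly the bypass paths above. Second, your normalization is not cosmetic --- it repairs a defect that the paper's argument actually has and yours does not. Without it, the assertion that a raised shortcut admits no repair is wrong: take $G$ with arcs $s_2\to s_1$ and $s_1\to t_2$ (a ``no'' instance, since $t_1$ is unreachable), and $P^0=s'_1\to t'_1\to s'_2\to s_2\to s_1\to t_2\to t'_2$. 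Then the chords $(s'_1,s_1)$, $(s'_2,t'_2)$ and the unused parallel arc together span every arc of $P^0$, so whichever single arc the adversary raises can be bypassed at cost $0$, giving $\opt_{\adv}(P^0)=0$ in a ``no'' instance; the paper's reduction as literally stated is broken on this example. Your normalization rules out such re-entrant paths (the only arc into $s_1$ becomes $(s'_1,s_1)$, and $t_2$ cannot appear on $P^0$ before $s'_2$), which is precisely what makes your step ``a raised shortcut has no spanning chord'' sound. So: keep the normalization, keep the chord analysis, and add the bypass case; with that one addition your proof is complete and in fact tighter than the paper's.
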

\begin{proof}
We use a similar reduction as in the proof of Theorem~\ref{thm:IR-SPP1} from the 2-disjoint-paths problem.
Consider an instance of the 2-disjoint-paths problem given by a directed graph $G=(N,A)$ and distinct nodes $s_1, s_2, t_1, t_2$. In addition, let $G'$ be the constructed graph as in the proof of Theorem~\ref{thm:IR-SPP1}.

We show that the 2-disjoint-paths problem is a ``yes'' instance if and only if there exists a path $P^0$ from $s'_1$ to $t'_2$ with $\opt_{\adv}(P^0)=0$.
We first suppose that there are two node-disjoint paths $P_1$ and $P_2$ from $s_1$ to $t_1$ and from $s_2$ to $t_2$, respectively. We set $P^0=(s'_1,s_1),P_1,(t_1,t'_1), (t'_1,s'_2),(s'_2,s_2),P_2, (t_2,t'_s)$.  If the cost of an arc which is not in the path $P^0$ is increased, then $P^0$ is still a path from $s'_1$ to $t'_2$ of cost zero. So we suppose that the cost of an arc in $P^0$ in increased by 1. If this arc is before node $t'_1$ (after node $s'_2$), then we add arc $(s'_1,t'_1)$ (arc $(s'_2,t'_2)$). In the case that the cost of arc $(t'_1,s'_2)$ is increased, the other parallel arc can be replaced.  Thus, in any case we can add one arc and have a new path from $s'_1$ to $t'_2$ of cost zero. 

We next proceed to prove the reverse direction.  Assume that there is path $P^0$ so that whenever the cost of one arc is increased, we can add at most one arc and have a path from $s'_1$ to $t'_2$ of cost zero. Such a path must not include either $(s'_1,t'_1)$ or  $(s'_2,t'_2)$ since otherwise when the cost of one of these two arcs increases, we cannot build up any path of cost zero by adding one arc. This implies that there is a path from $s_1$ to $t_1$ and a path from $s_2$ to $t_2$. This completes the proof of the theorem. 
  
\end{proof}

\subsection{The incremental function $|P\oplus P^0|$}

While the incremental shortest path problem with arc inclusion can be solved efficiently, the arc exclusion version is NP-complete \cite[see][]{SerefAhujaOrlin09}.  We next prove that the incremental shortest path problem is also NP-complete under the symmetric difference variant in which one is allowed to add or remove at most $K$ arcs from the given path. 

\begin{theorem}
Under the uncertainty set $\mathcal{U}_1$, the increment shortest path problem is NP-complete. 
\end{theorem}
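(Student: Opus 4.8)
The plan is to establish both halves of NP-completeness. Membership in NP is the easy half: the decision version asks whether, for a given bound $B$, there is an $s$--$t$ path $P$ with $|P\oplus P^0|\le K$ and $c(P)\le B$, and the path $P$ is itself a polynomial-size certificate whose cost and whose symmetric difference with $P^0$ can both be checked in polynomial time. So the real work is NP-hardness. For that I would reduce from the incremental shortest path problem under the arc-exclusion metric $|P^0\setminus P|$, which is already NP-complete \cite{SerefAhujaOrlin09}. The key difficulty is that the symmetric-difference budget simultaneously constrains the excluded arcs $|P^0\setminus P|$ \emph{and} the newly included arcs $|P\setminus P^0|$, whereas the exclusion problem constrains only the former; reusing an exclusion instance verbatim fails, since a cheap rerouting may introduce arbitrarily many new arcs and blow up the symmetric difference past the budget.

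I would overcome this by inflating the ``weight'' of each exclusion relative to the inclusions. Given an exclusion instance $(G,c,P^0,K,B)$ with $m:=|A|$, construct $G'$ by subdividing every arc of $P^0$ into a directed path of $M:=m+1$ arcs through $M-1$ fresh degree-two nodes, while leaving every arc of $A\setminus P^0$ untouched. Spread the original cost of each $P^0$-arc over its subdivided segment (placing it all on the first sub-arc is simplest, and preserves costs exactly), let $P^{0\prime}$ be the subdivided copy of $P^0$, and set $B':=B$ and $K':=M(K+1)-1$.

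Correctness rests on two observations. First, because the subdivision nodes have in-degree and out-degree one, any $s$--$t$ path in $G'$ traverses each subdivided segment all-or-nothing, so the $s$--$t$ paths of $G'$ are in cost-preserving bijection with those of $G$; under this bijection a path that excludes $e$ arcs of $P^0$ and includes $i$ new arcs satisfies $|P'\oplus P^{0\prime}|=Me+i$, where $i\le m=M-1$. Second, the choice $M>m$ makes a single additional exclusion dominate any possible number of inclusions: $Me+i\le K'=M(K+1)-1$ holds exactly when $e\le K$, since $e\le K$ gives $Me+i\le MK+(M-1)=K'$, while $e\ge K+1$ gives $Me+i\ge M(K+1)>K'$. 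Hence the exclusion instance is a yes-instance if and only if the constructed symmetric-difference instance is, and the reduction is clearly polynomial, which together with membership in NP yields NP-completeness. I expect the main obstacle to be precisely this decoupling of the two parts of the symmetric difference: the subdivision factor $M>|A|$ is the device that forces the symmetric-difference budget to simulate a pure exclusion budget, and the all-or-nothing traversal forced by the degree-two subdivision nodes is what keeps the exclusion count a clean multiple of $M$; a secondary point to check is that the cost distribution over subdivided segments transfers $B'=B$ without slack.
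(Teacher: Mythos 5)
Your proof is correct, but it takes a genuinely different route from the paper's. The paper proves hardness by a direct reduction from the 2-disjoint-paths problem of Fortune \emph{et al.}~\cite{Fortune80}: it augments $G$ with arcs $(s_1,t_1)$, $(s_2,t_2)$ and a chain of $n+1$ unit-cost arcs from $t_1$ to $s_2$, takes $P^0$ to be the path made entirely of these new arcs with $K=n+1$, and uses the fact that the chain's internal nodes have in- and out-degree one (so the chain is traversed all-or-nothing, and skipping it costs more symmetric difference than the budget allows) to conclude that the incremental optimum equals $n+1$ exactly when two node-disjoint zero-cost paths $s_1\to t_1$ and $s_2\to t_2$ exist. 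You instead reduce from the arc-exclusion incremental shortest path problem, whose NP-completeness~\cite{SerefAhujaOrlin09} the paper itself cites in the same subsection; your subdivision device is sound: the degree-two subdivision nodes force all-or-nothing traversal of each segment, giving a cost-preserving bijection between $s$--$t$ paths of $G'$ and $G$ under which $|P'\oplus P^{0\prime}|=Me+i$ with $i\le M-1$, and the threshold computation showing $Me+i\le K'=M(K+1)-1$ if and only if $e\le K$ is exactly right, so the symmetric-difference budget simulates a pure exclusion budget. Each approach buys something: yours is more modular and in fact generic (it shows that for shortest paths the symmetric-difference metric is at least as hard as the exclusion metric, independent of how the latter's hardness is proved), and it handles NP membership explicitly, which the paper leaves implicit; the paper's is self-contained modulo the classical 2-disjoint-paths result and keeps the section uniform, since its robust-incremental hardness proofs (Theorems~\ref{thm:IR-SPP} and~\ref{thm:IR-SPP1}) use the same source problem and similar gadgets. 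It is worth noting that both arguments ultimately rest on the same padding idea --- a long chain or subdivision whose all-or-nothing traversal makes the unwanted deviations unaffordable within the budget.
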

\begin{proof}
The proof is based on a reduction from the 2-disjoint-paths problem. Consider an instance of the 2-disjoint-paths problem given by a directed graph $G=(N,A)$ and distinct nodes $s_1, s_2, t_1, t_2$. We construct a graph $G'=(N',A')$ as follows. For $i=1,2$, introduce an arc $(s_i,t_i)$. Furthermore, we link node $t_1$ to node $s_2$ with $n+1$ series arcs $(t_1,i_1),(i_1,i_2),\ldots,(i_{n-1},i_{n}), (i_{n},s_2)$. We associate a cost of zero to all original arcs in $A$ and a cost of one to the new arcs.   {The construction of graph $G'$ from graph $G$ is shown in Figure~\ref{fig:Red2}.}

\begin{figure}[t]
    \centering
    \begin{tikzpicture}[inner sep=0.4mm]
  \node (1) at (1,0) [place] {$s_1$};
  \node (2) at (3,0) [place] {$i$};
  \node (3) at (5,0) [place] {$t_1$};
  \node (4) at (6,0) [place] {$i_1$};
   \node (41) at (7,0)  {};
  \node (51) at (8,0) {};
  \node (5) at (9,0) [place] {$i_n$};
  \node (6) at (10,0) [place] {$s_2$};
  \node (7) at (12,0) [place] {$j$};
  \node (8) at (14,0) [place] {$t_2$};
  \node (9) at (1.8,0.5) [] {};
  \node (10) at (1.8,-0.5) [] {};  
  \node (11) at (2.2,0.5) [] {};
  \node (12) at (2.2,-0.5) [] {};  
  \node (13) at (3.8,0.5) [] {};
  \node (14) at (3.8,-0.5) [] {};  
  \node (15) at (4.2,0.5) [] {};
  \node (16) at (4.2,-0.5) [] {};  
   \node (17) at (10.8,0.5) [] {};
  \node (18) at (10.8,-0.5) [] {};  
  \node (19) at (11.2,0.5) [] {};
  \node (20) at (11.2,-0.5) [] {};  
  \node (21) at (12.8,0.5) [] {};
  \node (22) at (12.8,-0.5) [] {};  
  \node (23) at (13.2,0.5) [] {};
  \node (24) at (13.2,-0.5) [] {}; 
\begin{scope}[color=black,line width=1pt]
  \draw [->] (1)  to [bend left=65]   (3) node [above, sloped,midway]{};
   \draw [->] (6)   to [bend left=65]   (8) node [above, sloped,midway]{};
    \draw [->] (1)   --  (9) node [above, sloped,midway]{};
  \draw [->] (1)   --  (10) node [above, sloped,midway]{};
  \draw [->] (11)   --  (2) node [above, sloped,midway]{};
  \draw [->] (12)   --  (2) node [above, sloped,midway]{};
  \draw [->] (2)   --  (13) node [above, sloped,midway]{};
  \draw [->] (2)   --  (14) node [above, sloped,midway]{};
  \draw [->] (15)   --  (3) node [above, sloped,midway]{};
  \draw [->] (15)   --  (3) node [above, sloped,midway]{};
  \draw [->] (3)   --  (4) node [above, sloped,midway]{};
  \draw [->] (4)   --  (41) node [above, sloped,midway]{};
 \draw [->] (51)   --  (5) node [above, sloped,midway]{};
  \draw [->] (5)   --  (6) node [above, sloped,midway]{};
  \draw [->] (6)   --  (17) node [above, sloped,midway]{};
  \draw [->] (6)   --  (18) node [above, sloped,midway]{};
  \draw [->] (19)   --  (7) node [above, sloped,midway]{};
  \draw [->] (20)   --  (7) node [above, sloped,midway]{};
  \draw [->] (7)   --  (21) node [above, sloped,midway]{};
  \draw [->] (7)   --  (22) node [above, sloped,midway]{};
  \draw [->] (23)   --  (8) node [above, sloped,midway]{};
  \draw [->] (24)   --  (8) node [above, sloped,midway]{};
\end{scope}
\end{tikzpicture}
\caption{\label{fig:Red2}  {The graph $G'$ is constructed from $G$ by adding two new arcs $(s_1,t_1)$ and $(s_2,t_2)$ and by creating a path of length  $n+1$  from node $t_1$ to node $s_2$ with $n$ new nodes and $n+1$ new arcs.}}
\end{figure}
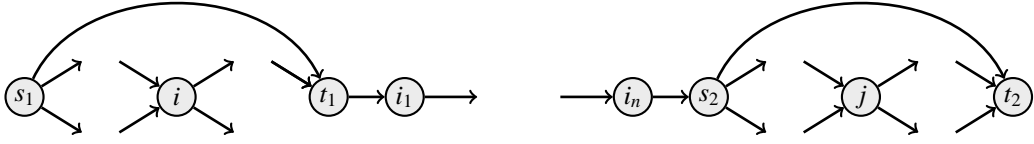

We now suppose that the path $P^0$ is given as 
\begin{align*}
P^0=(s_1,t_1), (t_1,i_1),(i_1,i_2),\ldots,(i_{n-1},i_{n}), (i_{n},s_1), (s_2,t_2),
\end{align*}
and let $K=n+1$. It is easy to see that the 2-disjoint-paths problem is a ``yes'' instance if and only if the optimal value of the incremental shortest path problem is $n+1$ because the incremental shortest path must include the arcs $(t_1,i_1),(i_1,i_2),\ldots, (i_{n},s_2)$. 
  
\end{proof}

\section{Adversarial Minimum Spanning Tree Problem}
\label{sec:MST}
In this section, we consider the \emph{minimum spanning tree} problem. 
In this problem, we are given an undirected graph $G=(N,A)$ with $n$ nodes and $m$ arcs. Each arc $(i,j)\in A$ has an associated \emph{cost}~$c_{ij}$. The problem is to determine a spanning tree of minimum cost. This problem can be formulated as follows:
\begin{align}
\label{pro:MST}
  \begin{aligned}
	& \min	&&\sum_{(i,j)\in A} c_{ij}x_{ij} \\
          		  		& \text{~s.t.}   &&\begin{aligned}[t]
        							\sum_{(i,j)\in A} x_{ij} &\geq n-1, \\
							\sum_{(i,j)\in A: i,j \in S} x_{ij}&\leq |S|-1 && \forall S\subseteq N,\\
							x_{ij}     &\in  \{0,1\} &&\forall (i,j)\in A.          
  \end{aligned}
\end{aligned}
\end{align}
The binary variable $x_{ij}$ indicates arc $(i,j)$ is in the tree or not, depending on whether $x_{ij}=0 $ or 1, respectively. The first constraint guarantees the existences of at least $n-1$ arcs.  The second set of constraints, referred to it as \emph{subtour-elimination constraints}, ensures that the set of selected arcs does not include any  cycle. There is a one-to-one correspondence between feasible solutions of Problem~\eqref{pro:MST} and spanning trees in graph $G$.  

The linear programming relaxation of Problem~\eqref{pro:MST} is known as  the \emph{subtour LP}.  Edmonds \cite{Edmonds67} shows that every extreme point of the feasible region of the subtour LP is binary and corresponds to the incidence vector of a spanning tree.  In addition, there is a polynomial time separation oracle for the constraints in the subtour LP~(see, e.g., \citep{Schrijver03}).  We will rely on this separation algorithm in order to efficiently solve the adversarial problem.

We suppose that each arc $(i,j)\in A$ has an associated interval $[\bar{c}_{ij},\bar{c}_{ij}+\hat{c}_{ij}]$.  Any possible realization of the arc cost $c_{ij}$ lies in this interval. Given an initial spanning tree $T^0$,  one can modify the tree after observing the arc costs. We measure the distance between two spanning trees by the arc inclusion operator; that is, one can build a new spanning tree by adding at most new $K$ arcs. Since each spanning tree has $n-1$ arcs, this case is equivalent to excluding at most $K$ arcs.  It is also equivalent to finding  a symmetric difference with at most $2K$ arcs.  From the viewpoint of complexity analysis, these three problem variants are equivalent.  Accordingly, we just consider the case where a new spanning tree can be built by adding $K$ new arcs.

We will examine the adversarial minimum spanning tree problem under the uncertainty sets $\mathcal{U}_1$ and $\mathcal{U}_2$. We first consider the uncertainty set $\mathcal{U}_2$. 
The adversarial problem is the problem of finding the $\Gamma$ most vital  arcs, i.e., the $\Gamma$ arcs whose removal from the graph will lead in the greatest increase in the cost of the minimum spanning tree in the remaining graph. Lin \emph{et al.} \cite{LinChern93} show that this problem is NP-hard for arbitrary $\Gamma$. Their result immediately implies the following.

\begin{theorem}
Under the uncertainty set $\mathcal{U}_2$, the decision variant of the adversarial minimum spanning tree problem is NP-complete.
\end{theorem}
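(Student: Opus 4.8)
The plan is to establish the two halves of NP-completeness separately: membership in NP and NP-hardness, where the latter follows quickly from the result of Lin \emph{et al.} \cite{LinChern93} once the adversarial minimum spanning tree problem under $\mathcal{U}_2$ is recognized as a generalization of the $\Gamma$ most vital arcs problem.

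For membership, I would fix the decision variant: given a threshold $L$, decide whether there is a vector $\bls{\delta}\in\{0,1\}^{|A|}$ with $\sum_{(i,j)\in A}\delta_{ij}\leq\Gamma$ such that the cost of a minimum spanning tree in the graph with arc costs $\bar c_{ij}+\delta_{ij}\hat c_{ij}$ is at least $L$. A certificate is simply the set $S$ of at most $\Gamma$ arcs on which $\delta_{ij}=1$. To verify it, I would form the corresponding modified cost vector and compute a minimum spanning tree of the resulting graph in polynomial time (e.g.\ by Kruskal's or Prim's algorithm) and check whether its cost is at least $L$. Since the inner minimization $\min_{\bl{y}\in\mathcal{S}}\bl{c}^T\bl{y}$ is an ordinary minimum spanning tree computation for the fixed cost vector determined by $S$, the verification runs in polynomial time, so the decision variant is in NP.

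For NP-hardness, I would reduce from the decision version of the $\Gamma$ most vital arcs problem, which Lin \emph{et al.} \cite{LinChern93} prove NP-hard for arbitrary $\Gamma$: given a connected graph $G=(N,A)$ with costs $c_{ij}$, an integer $\Gamma$, and a threshold $L$, decide whether there exist $\Gamma$ arcs whose removal raises the minimum spanning tree cost of the remaining graph to at least $L$. From such an instance I would build an adversarial instance on the same graph with nominal costs $\bar c_{ij}:=c_{ij}$ and increments $\hat c_{ij}:=M$, where $M$ is chosen strictly larger than the total cost $\sum_{(i,j)\in A}c_{ij}$ of all arcs. The point of this choice is that raising the cost of an arc by $M$ renders it effectively unusable: a minimum spanning tree of the modified graph avoids every arc whose cost was raised whenever the arcs left at their nominal cost still span $N$. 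Hence, for any set $S$ of at most $\Gamma$ arcs whose removal keeps $G$ connected, the adversarial value obtained by raising exactly the costs of $S$ equals the minimum spanning tree cost of $G$ with $S$ removed; and if raising the costs of $S$ forces the tree to use an $M$-cost arc, the adversarial value is at least $M$, exceeding any spanning-tree cost attainable without disconnection. Taking the adversarial threshold to be the same $L$ (noting $L\le M$, since $L$ is a spanning-tree cost on the original arcs), I would argue that the adversarial value is at least $L$ if and only if the most vital arcs instance is a ``yes'' instance.

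The main obstacle I anticipate is the boundary case of arcs whose cost increase would disconnect the cheap subgraph. I must pin down the precise definition of the most vital arcs problem used by Lin \emph{et al.}---in particular whether removals that disconnect $G$ are disallowed or are treated as yielding the largest possible increase---and then align the choice of $M$ and of the threshold $L$ so that the ``disconnecting'' case maps consistently to the adversarial value being $\geq M\geq L$. Once this case analysis is reconciled with their definition, membership in NP together with the reduction yields NP-completeness.
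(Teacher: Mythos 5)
Your proposal is correct and matches the paper's approach: the paper likewise obtains NP-hardness by identifying the adversarial MST problem under $\mathcal{U}_2$ (with large increments $\hat{c}_{ij}$ simulating arc removal) with the $\Gamma$ most vital arcs problem of Lin \emph{et al.} \cite{LinChern93}, and NP membership is immediate since a minimum spanning tree for the certified cost vector can be computed in polynomial time. In fact you are somewhat more careful than the paper, which states the implication as immediate without spelling out the big-$M$ construction or the disconnection boundary case you flag.
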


We next show that the adversarial minimum spanning tree problem is solvable in polynomial time if the possible range of the cost vector $\bl{c}$ is given by $\mathcal{U}_1$. Given a spanning tree $T^0$ and a vector $\bl{c}$ of costs, we define $\opt_{\inc}(T^0,\bl{c})$ to be the optimal value of the incremental optimization problem, that is, 
\begin{align}
\label{pro:Inc-minimum spanning tree(NLP)}
  \begin{aligned}
	\opt_{\inc}(T^0,\bl{c}) := 	& \min	&&\sum_{(i,j)\in A} c_{ij}x_{ij} \\
          		  		& \text{~s.t.}   &&\begin{aligned}[t]
							\sum_{(i,j)\notin T^0} x_{ij}&\leq K,\\
        							\bl{x}&\in \mathcal{S}.\\
  \end{aligned}
\end{aligned}
\end{align}
 {where $\mathcal{S}$ is the feasible region of Problem~\eqref{pro:MST}. }

 {Problem \eqref{pro:Inc-minimum spanning tree(NLP)} has the integrality property because it is represented as the intersection of two matroids: a uniform matroid and the forest matroid \citep[see][]{Edmonds70}. We then apply the Lagrangian relaxation method to relax the first constraint.} By associating a Lagrangian multiplier with respect to the first constraint of Problem~\eqref{pro:Inc-minimum spanning tree(NLP)}, we obtain the following relaxation problem: 
\begin{align}
\label{pro:MST-L(lambda)}
	L(\lambda) := \min_{\bl{x}\in \mathcal{S}}  &\quad	 \sum_{(i,j)\in A} c_{ij}x_{ij}+\lambda \sum_{(i,j)\notin T^0}x_{ij}-\lambda K.
\end{align}
The binary variables $x_{ij}\in \{0,1\}$ can be replaced by $x_{ij}\geq 0$ in the Lagrangian relaxation $L(\lambda)$ (note that the constraints $x_{ij}\leq 1, (i,j)\in A$ become redundant due to the subtour-elimination constraints). By strong duality for linear programming, we can write
\begin{align}
\label{pro:MST-L(lambda)*}
\begin{aligned}
	L(\lambda) = 	& \max	&&(n-1)w-\sum_{S\subseteq N} (|S|-1)y_S-\lambda K \\
          		  		& \text{~s.t.}   &&\begin{aligned}[t]
									w-\sum_{S\subseteq N: i,j\in S} y_S&\leq c_{ij} && (i,j)\in T^0,\\
									w-\sum_{S\subseteq N: i,j\in S} y_S&\leq c_{ij}+\lambda && (i,j)\notin T^0,\\
        									y_S&\geq 0 && \forall S\subseteq N,\\
        									w&\geq 0.
								\end{aligned}
\end{aligned}
\end{align}
Because Problem (\ref{pro:Inc-minimum spanning tree(NLP)}) has the integrality property, 
$\opt_{\inc}(T^0,\bl{c})=\max_{\lambda\geq 0} L(\lambda)$. 
The adversarial problem seeks a cost vector $\bl{c} \in \mathcal{U}_1$ that maximizes $\opt_{\inc}(T^0,\bl{c})$. Let $\opt_{\adv}(T^0)$ denote the optimal value of the adversarial problem, starting from the initial tree $T^0$. Following the above discussion, we have
\begin{align}
\label{pro:MST-advopt}
\begin{aligned}
	\opt_{\adv}(T^0) = 	&\max	&&(n-1)w-\sum_{S\subseteq N} (|S|-1)y_S- \lambda K \\
          		  		& \text{~s.t.}   &&\begin{aligned}[t]
									w-\sum_{S\subseteq N: i,j\in S} y_S-\delta_{ij}&\leq \bar{c}_{ij} && (i,j)\in T^0,\\
									w-\sum_{S\subseteq N: i,j\in S} y_S-\delta_{ij}-\lambda&\leq \bar{c}_{ij} && (i,j)\notin T^0,\\
									\sum_{(i,j)\in A} \delta_{ij}&\leq \Gamma,\\
									0\leq \delta_{ij}&\leq \hat{c}_{ij}  && \forall (i,j)\in A,\\
        									y_S&\geq 0 && \forall S\subseteq N,\\
        									\lambda,w&\geq 0.
								\end{aligned}
\end{aligned}
\end{align}

The dual problem is:
\begin{align}
\label{pro:MST-advopt*}
\begin{aligned}
	\opt_{\adv}(T^0) = 	&\min	&&\sum_{(i,j)\in A} \bar{c}_{ij}x_{ij}+\sum_{(i,j)\in A} \beta_{ij}\hat{c}_{ij}+\theta\Gamma \\
          		  		& \text{~s.t.}   &&\begin{aligned}[t]
									-x_{ij}+\beta_{ij}+\theta&\geq 0 && \forall (i,j)\in A,\\
									\sum_{(i,j)\notin T^0} x_{ij}&\leq K,\\
									\beta_{ij}&\geq 0  && \forall (i,j)\in A,\\	
									\theta&\geq 0,\\	
        									\bl{x}&\in \mathcal{S}_{\text{R}},
								\end{aligned}
\end{aligned}
\end{align}
 {where $\mathcal{S}_{\text{R}}$ is the feasible region of the subtour LP.}

We notice that the set $\mathcal{S}_{\text{R}}$ contains exponentially many constraints. However, as mentioned before, there exists a polynomial time separation oracle for the constraints in $\mathcal{S}_{\text{R}}$. This implies that  Problem~\eqref{pro:MST-advopt*} can solved in polynomial time by the equivalence of optimization and separation \citep{GLS88}.  

\begin{theorem}
Under the uncertainty set $\mathcal{U}_1$, the adversarial minimum spanning tree problem can be solved in polynomial time.
\end{theorem}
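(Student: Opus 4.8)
The plan is to establish the theorem by exhibiting an explicit polynomial-time algorithm that solves Problem~\eqref{pro:MST-advopt*}, since the text has already derived that $\opt_{\adv}(T^0)$ equals the optimal value of this minimization program. The essential observation is that $\opt_{\adv}(T^0)$ is written as the optimum of a linear program whose only source of intractability is the feasible region $\mathcal{S}_{\text{R}}$, the subtour LP, which contains exponentially many subtour-elimination constraints. All the remaining constraints—namely $-x_{ij}+\beta_{ij}+\theta\geq 0$, $\sum_{(i,j)\notin T^0} x_{ij}\leq K$, and the nonnegativity constraints on $\beta$ and $\theta$—are polynomial in number, so the difficulty is entirely concentrated in $\mathcal{S}_{\text{R}}$.

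First I would invoke the fact, cited from Edmonds~\cite{Edmonds67}, that the subtour LP $\mathcal{S}_{\text{R}}$ is an integral polytope whose extreme points are exactly the incidence vectors of spanning trees, together with the fact stated in the excerpt (see~\citep{Schrijver03}) that $\mathcal{S}_{\text{R}}$ admits a polynomial-time separation oracle. The strategy is then to combine this separation oracle for $\mathcal{S}_{\text{R}}$ with a trivial separation oracle for the polynomially many additional linear constraints appearing in Problem~\eqref{pro:MST-advopt*}. Given any candidate point $(\bl{x},\bls{\beta},\theta)$, one checks the explicitly listed constraints directly in polynomial time, and if all are satisfied one runs the subtour separation routine on $\bl{x}$; either it certifies $\bl{x}\in\mathcal{S}_{\text{R}}$ or it returns a violated subtour-elimination inequality. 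This yields a polynomial-time separation oracle for the entire feasible region of Problem~\eqref{pro:MST-advopt*}.

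Having a polynomial-time separation oracle, I would then appeal to the equivalence of separation and optimization via the ellipsoid method, as stated in Gr\"otschel, Lov\'asz and Schrijver~\cite{GLS88}: a linear program over a polytope with a polynomial-time separation oracle (and suitably bounded encoding length) can be solved in polynomial time. Since the objective $\sum_{(i,j)\in A}\bar{c}_{ij}x_{ij}+\sum_{(i,j)\in A}\beta_{ij}\hat{c}_{ij}+\theta\Gamma$ is linear and the data have polynomial encoding size, this produces the optimal value $\opt_{\adv}(T^0)$ in polynomial time. Because the adversarial minimum spanning tree problem under $\mathcal{U}_1$ is precisely the problem of computing $\opt_{\adv}(T^0)$, this completes the argument.

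The main obstacle, and the point that deserves the most care, is justifying that the whole chain of dualities producing Problem~\eqref{pro:MST-advopt*} is valid and that the value it computes genuinely equals $\opt_{\adv}(T^0)$. The key enabler here is the integrality property of Problem~\eqref{pro:Inc-minimum spanning tree(NLP)}: because it is the intersection of a uniform matroid and the forest matroid~\cite{Edmonds70}, its natural linear relaxation is integral, so $\opt_{\inc}(T^0,\bl{c})=\max_{\lambda\geq 0} L(\lambda)$ holds with no duality gap, which is what legitimizes replacing the inner incremental problem by its Lagrangian dual and then taking duals again to arrive at~\eqref{pro:MST-advopt*}. Once this integrality-based identity is in hand, the separation-plus-ellipsoid machinery handles the computational side cleanly, and the only remaining bookkeeping is verifying that the encoding lengths of all parameters (in particular $\Gamma$, $K$, and the cost bounds) remain polynomial, which is immediate from the problem input.
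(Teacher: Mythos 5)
Your proposal is correct and follows essentially the same route as the paper: it relies on the same chain of reformulations (Lagrangian relaxation justified by the matroid-intersection integrality property, followed by LP duality to reach Problem~\eqref{pro:MST-advopt*}), and then solves that program by combining the polynomial-time separation oracle for the subtour LP with the trivially checkable remaining constraints and invoking the Gr\"otschel--Lov\'asz--Schrijver equivalence of separation and optimization. The only difference is expository: you make explicit the construction of the combined separation oracle and the encoding-length bookkeeping, which the paper leaves implicit.
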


\section{Conclusions}
We have presented a robust incremental approach to address uncertainty in optimization problems. We model the case in which the decision maker is allowed to make an incremental change after observing the realization of the uncertain parameters. We addressed the complexity of several optimization problems within this framework. We showed that the robust incremental counterpart of a linear programming problem is solvable in polynomial time. We also established NP-hardness of several robust incremental problems by showing that the adversarial problem or the incremental problem is NP-hard. This puts a lower bound on the complexity of  these problems. It is open as to whether the decision versions of the robust incremental problems are $\Sigma^2_p$-complete. We remark that if the incremental problem as well as the adversarial problem are in the class NP, then the robust incremental problem is in the class $\Sigma^2_p$. Another open issue is the complexity of the robust incremental minimum spanning tree problem.

\bibliographystyle{ormsv080}
\bibliography{mybib}

\end{document}